\numberwithin{equation}{section}
\numberwithin{figure}{section}
\numberwithin{table}{section}
\theoremstyle{definition}
\newtheorem{defn}{\protect\definitionname}
\theoremstyle{plain}
\newtheorem{lem}{\protect\lemmaname}
  \newenvironment{proof}[1][\proofname]{\par
    \normalfont\topsep6\p@\@plus6\p@\relax
    \trivlist
    \itemindent\parindent
    \item[\hskip\labelsep
          \scshape
      #1]\ignorespaces
  }{%
    \endtrivlist\@endpefalse
  }
  \providecommand{\proofname}{Proof}
\theoremstyle{plain}
\newtheorem{thm}{\protect\theoremname}
\newenvironment{lyxcode}
	{\par\begin{list}{}{
		\setlength{\rightmargin}{\leftmargin}
		\setlength{\listparindent}{0pt}
		\raggedright
		\setlength{\itemsep}{0pt}
		\setlength{\parsep}{0pt}
		\normalfont\ttfamily}%
	 \item[]}
	{\end{list}}
\providecommand*{\code}[1]{\texttt{#1}}
\providecommand{\definitionname}{Definition}
\providecommand{\lemmaname}{Lemma}
\providecommand{\theoremname}{Theorem}
\begin{document}
\title{Mutual Coinduction}
\author{Moez A. AbdelGawad\\
\code{moez@cs.rice.edu}}
\maketitle
\begin{abstract}
In this paper%
{} we present mutual coinduction as a dual of mutual induction and also
as a generalization of standard coinduction. In particular, we present
a precise formal definition of mutual induction and mutual coinduction.
In the process we present the associated mutual induction and mutual
coinduction proof principles, and we present the conditions under
which these principles hold.

In spite of some mention of mutual (co)induction in research literature,
but the formal definition of mutual (co)induction and the proof of
the mutual (co)induction proof principles we present here seem to
be the first such definition and proof. As such, to the best of our
knowledge, it seems our work is the first to point out that, unlike
the case for standard (co)induction, monotonicity of generators seems
not sufficient for guaranteeing the existence of least and greatest
simultaneous fixed points in complete lattices, and that continuity
on the other hand is sufficient for guaranteeing their existence.\footnote{This paper has been updated so as to not require the continuity of
generators but only require their monotonicity. See Appendix~\ref{sec:Requiring-Only-Monotonicity}.
A full revision of the paper to reflect the relaxed requirement is
currently underway.}

In the course of our presentation of mutual coinduction we also discuss
some concepts related to standard (also called direct) induction and
standard coinduction, as well as ones related to mutual (also called
simultaneous or indirect) induction. During the presentation we purposely
discuss particular standard concepts (namely, fixed points, least
and greatest fixed points, pre-fixed points, post-fixed points, least
pre-fixed points, and greatest post-fixed points) so as to help motivate
the definitions of their more general counterparts for mutual/simul\-taneous/in\-direct
(co)induction (namely, simultaneous fixed points, least simultaneous
fixed points, greatest simultaneous fixed points, least simultaneous
pre-fixed points and greatest simultaneous post-fixed points). Greatest
simultaneous post-fixed points, in particular, will be abstractions
and models of mathematical objects (\emph{e.g.}, points, sets, types,
predicates, etc.) that are defined mutually-coinductively.
\end{abstract}

\section{\label{sec:Introduction}Introduction}

Induction and coinduction, henceforth called \emph{standard }(\emph{co})\emph{induction},
present mathematical models for (standard) recursive definitions (also
called \emph{direct} recursive or \emph{self}-recursive definitions).
In the same way, mutual induction and mutual coinduction, henceforth
called \emph{mutual }(\emph{co})\emph{induction}, present mathematical
models for mutually-recursive definitions, which are sometimes also
called \emph{indirect} recursive definitions or \emph{simultaneous}
recursive definitions.

In other concurrent work we present, first, some practical motivations
from programming languages theory for defining mutual (co)induction~\cite{AbdelGawad2019e}
(included in Appendix~\ref{sec:Motivations-from-PLT} of this paper
temporarily), then we present some intuitions for their mathematical
definitions as well as examples of their use in~\cite{AbdelGawad2019c},
and we also present possible formulations of mutual (co)induction
in other mathematical disciplines in~\cite{AbdelGawad2019d}. In
that concurrent work we conclude that: if mutually-recursive functional
programs can be reasoned about mathematically, then also imperative
and object-oriented programs (even the worst such programs) can be
reasoned about mathematically. Interested readers are invited to check
this concurrent work.

This paper is structured as follows. We first motivate the formal
definition of mutual (co)induction by presenting in~$\mathsection$\ref{subsec:Standard-(Co)Induction}
the formal definitions of standard (co)induction, in the context of
order theory. Then we present in~$\mathsection$\ref{sec:Mutual-(Co)Induction}
the order-theoretic definitions of mutual (co)induction and we present
the mutual (co)induction proof principles. (We present proofs for
the lemmas and theorems of~$\mathsection$\ref{sec:Mutual-(Co)Induction}
in Appendix~\ref{sec:Theorems-and-Proofs}.) We briefly discuss some
related work in~$\mathsection$\ref{sec:Related-Work}, then we conclude
and discuss some possible future work in~$\mathsection$\ref{sec:Conclusion-and-FW}.

Our work here is a followup on our earlier introductory work in~\cite{AbdelGawad2018f,AbdelGawad2019a,AbdelGawad2019b}.
In addition to the practical motivations mentioned above (whose interest
in we started in~\cite{AbdelGawad2017c}), the work presented here
has also been motivated by that earlier introductory work.

\section{\label{subsec:Standard-(Co)Induction}Standard Induction and Standard
Coinduction}

The formulation of standard induction and standard coinduction, and
of related concepts, that we present here is a summary of the formulation
presented in~\cite[$\mathsection$2.1]{AbdelGawad2018f}.

Let $\leq$ (`is less than or equal to') be an ordering relation
over a set $\mathbb{O}$ and let $F:\mathbb{O\rightarrow\mathbb{O}}$
be an endofunction over $\mathbb{O}$ (also called a self-map over
$\mathbb{O}$, \emph{i.e.}, a function whose domain and codomain are
the same set, thus mapping a set into itself%
).

Given a point $P\in\mathbb{O}$, the point $F\left(P\right)$ is called
the \emph{$F$-image} of $P$.

A point $P\in\mathbb{O}$ is called a \emph{pre-fixed point} of $F$
if its $F$-image is less than or equal to it, \emph{i.e.}, if 
\[
F\left(P\right)\leq P.
\]

Dually, a point $P\in\mathbb{O}$ is called a \emph{post-fixed point}
of $F$ if it is less than or equal to its $F$-image, \emph{i.e.},
if 
\[
P\leq F\left(P\right).
\]

A point $P\in\mathbb{O}$ is called a \emph{fixed }point of $F$ if
it is equal to its $F$-image, \emph{i.e.}, if 
\[
P=F\left(P\right).
\]
 (A fixed point of $F$ is simultaneously a pre-fixed point of $F$
and a post-fixed point of $F$.)\medskip{}

Now, if $\leq$ is a complete lattice over $\mathbb{O}$ (\emph{i.e.},
if $\leq$ is an ordering relation where meets $\wedge$ and joins
$\vee$ of \emph{all} subsets of $\mathbb{O}$ are guaranteed to exist
in $\mathbb{O}$) and if $F$ is a monotonic\emph{ }endofunction over
$\mathbb{O}$ ($F$ is then called a \emph{generator}), then the \emph{least
pre-fixed point }of $F$, called $\mu_{F}$, exists in $\mathbb{O}$,
by the Knaster-Tarski Fixed Point Theorem~\cite{Knaster1928,Tarski1955},
$\mu_{F}$ is also the \emph{least fixed point} (\emph{lfp}) of $F$,
and the \emph{greatest post-fixed point} of $F$, called $\nu_{F}$,
exists in $\mathbb{O}$ and, again by the Fixed Point Theorem, $\nu_{F}$
is also the \emph{greatest fixed point} (\emph{gfp}) of $F$.

Further, %
for any element $P\in\mathbb{O}$ we have:
\begin{itemize}
\item (\emph{standard induction})\phantom{co} if $F\left(P\right)\leq P$,
then $\mu_{F}\leq P$,\medskip{}
\\
which, in words, means that if $P$ is a pre-fixed/induc\-tive point
of $F$, then point $\mu_{F}$ is less than or equal to $P$ (since
$\mu_{F}$, by definition, is the least $F$-inductive point), and,
\item (\emph{standard coinduction}) if $P\leq F\left(P\right)$, then $P\leq\nu_{F}$,\medskip{}
\\
which, in words, means that if $P$ is a post-fixed/coinduc\-tive
point of $F$, then point $P$ is less than or equal to point $\nu_{F}$
(since $\nu_{F}$, by definition, is the greatest $F$-coinductive
point).
\end{itemize}

\paragraph{References}

See~\cite{Davey2002,Rom2008,AbdelGawad2018f}.

\section{\label{sec:Mutual-(Co)Induction}Mutual Induction and Mutual Coinduction}

In this section we first present some intuition for mutual (co)induction
as generalizations of standard (co)induction, then we present their
formal definitions and formulate the associated proof principles.
Our illustrated presentation includes a discussion of the continuity
of generators, which seems necessary for proving the existence of
simultaneous fixed points (but see Appendix~\ref{sec:Requiring-Only-Monotonicity}).
In Appendix~\vref{sec:Theorems-and-Proofs} we present proofs for
the lemmas and theorems we present here.

\subsubsection{Intuition}

Intuitively, compared to standard (co)induction which involves one
self-map from a poset to itself, mutual (co)induction involves two
or more mappings between two or more ordered sets (\emph{i.e.}, posets)%
. That's all. In this paper, for simplicity, we focus only on the
case involving just two mappings (also called \emph{generators}) between
two posets. As we define it below, our definition of mutual induction
can be extended easily to involve more than two orderings, more than
two underlying sets, and more than two mappings between the ordered
sets. Also, it should be noted that in some practical applications
of mutual (co)induction the two orderings, and their underlying sets,
may happen to be the \emph{same} ordering and set. For proper mutual
(co)induction, however, there has to be at least two mutual generators/mappings
between the two ordered sets.\footnote{Standard induction can then obtained as a special case of mutual induction
by having one ordering and one underlying set \emph{and} also having
one of the two generators---or, more accurately, all but one of the
generators---be the identity function. See also Footnote~\ref{fn:Mut->Stand}.}

Also intuitively, the mutual induction and mutual coinduction proof
principles are expressions of the properties of two points (\emph{i.e.},
elements of the two ordered sets) that are together (\emph{i.e.},
simultaneously) least pre-fixed points of each of the two generators
and of two points that are together greatest post-fixed points of
each of the two generators. As such mutual induction and mutual coinduction
generalize the standard induction and standard coinduction proof principles.
Further, in case the two orderings are complete lattices and the mappings
are continuous (thereby called\emph{ generators} or \emph{mutual generators}),
the two least simultaneous pre-fixed points and the two greatest simultaneous
post-fixed points will also be \emph{mutual fixed points} (sometimes
also called \emph{simultaneous }or \emph{reciprocal }fixed points)
of the generators. (For a glimpse, %
see Figures~\ref{fig:SPreFPs},~\ref{fig:Mult-SPreFPs}, and~\ref{fig:Cont-FG}.)
\begin{figure}
\noindent \begin{centering}
\includegraphics[scale=0.5]{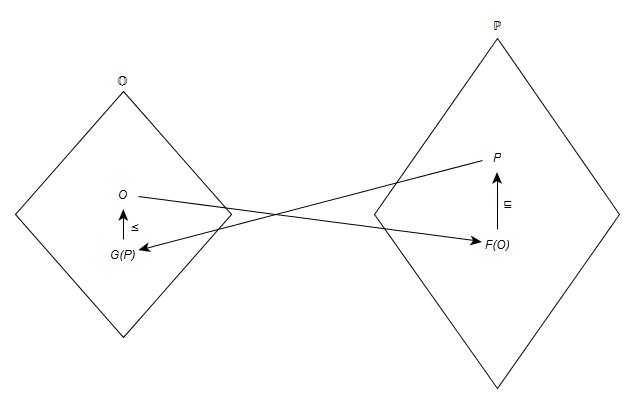}
\par\end{centering}
\caption{\label{fig:SPreFPs}Illustrating simultaneous pre-fixed points, \emph{e.g.},
points $O$ and $P$.}
\end{figure}
\begin{figure}
\noindent \begin{centering}
\includegraphics[scale=0.5]{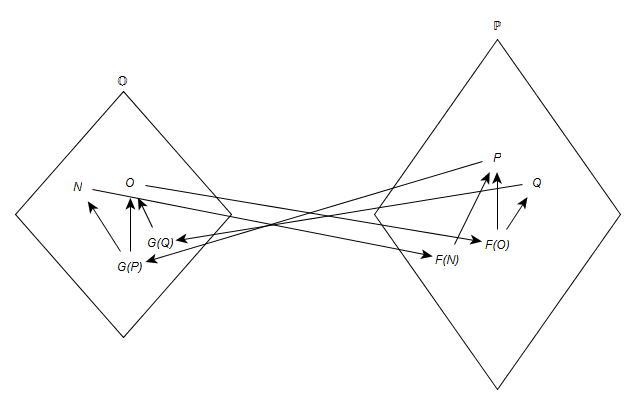}
\par\end{centering}
\caption{\label{fig:Mult-SPreFPs}The points $O$ and $P$, $O$ and $Q$,
and $N$ and $P$ illustrate having multiple simultaneous pre-fixed
points that share some of their component points.}
\end{figure}
\begin{figure}
\noindent \begin{centering}
\includegraphics[scale=0.5]{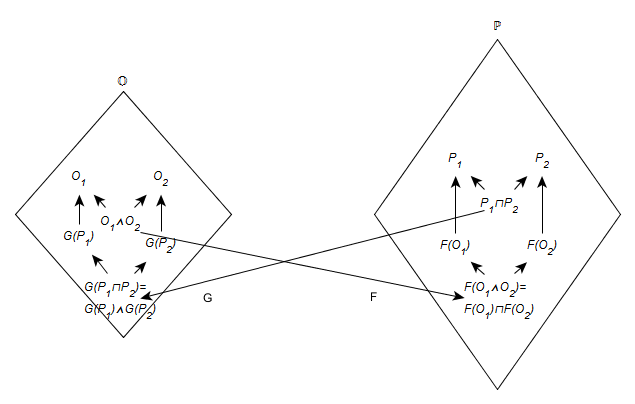}
\par\end{centering}
\caption{\label{fig:Cont-FG}(Meet-)Continuous mutual generators $F$ and $G$.
}
\end{figure}

\subsubsection{Formulation}

Let $\leq$ be an ordering relation over a set $\mathbb{O}$ and let
$\sqsubseteq$ be a second ordering relation over a second set $\mathbb{P}$.
Further, let $F:\mathbb{O\rightarrow\mathbb{P}}$ and $G:\mathbb{P\rightarrow\mathbb{O}}$
be two \emph{mutual} \emph{endofunctions} over $\mathbb{O}$ and $\mathbb{P}$
(also called \emph{indirect }or \emph{reciprocal self-maps} over $\mathbb{O}$
and $\mathbb{P}$, \emph{i.e.}, two functions where the domain of
the first is the same as the codomain of the second \emph{and} vice
versa, such that each of the two posets is mapped into the other).\footnote{As in~\cite{AbdelGawad2018f}, we are focused on unary functions
in this paper because we are interested in discussing fixed points
and closely-related concepts such as induction and coinduction, to
which multi-arity seems to make little difference. For more details,
see the brief discussion on arity and currying in~\cite{AbdelGawad2018f}.} Note that given two mutual endofunctions $F$ and $G$ we can always
compose $F$ and $G$ to get the (standard) endofunctions $G\circ F:\mathbb{O}\rightarrow\mathbb{O}$
and $F\circ G:\mathbb{P}\rightarrow\mathbb{P}$.\footnote{Upon seeing these compositions, readers familiar with category theory
may immediately suspect a possible connection with adjunctions. That
possibility of a connection will increase when readers check the definitions
below, of simultaneous pre-fixed points and simultaneous post-fixed
points. We intend to dwell on the possibility of such a connection
in future work (or in later versions of this paper).}\textsuperscript{,}\footnote{\label{fn:Mut->Stand}Note that if we have $\mathbb{P}=\mathbb{O}$
(the two underlying sets are the same), $\sqsubseteq=\leq$ (with
the same orderings), and if $G=\mathbf{1}$ (or $F=\mathbf{1}$, where
$\mathbf{1}$ is the identity function), then we obtain standard (co)induction
as a special case of mutual (co)induction. In particular, all definitions
presented below will smoothly degenerate to their standard counterparts
(\emph{i.e.}, will correspond to ones for standard (co)induction.
See~$\mathsection$\ref{subsec:Standard-(Co)Induction}).}

Given points $O\in\mathbb{O}$ and $P\in\mathbb{P}$, the point $F\left(O\right)\in\mathbb{P}$
and the point $G\left(P\right)\in\mathbb{O}$ are called the \emph{$F$-image}
of $O$ and the $G$\emph{-image} of $P$, respectively.

Two points $O\in\mathbb{O}$, $P\in\mathbb{P}$ are called \emph{simultaneous}
(or \emph{mutual} or \emph{reciprocal}) \emph{pre-fixed points} of
$F$ and $G$ if the $F$-image of $O$ is less than or equal to $P$
\emph{and} the $G$-image of $P$ is less than or equal to $O$ (that
is, intuitively, if ``the images of the two points are less than
the two points themselves''), \emph{i.e.}, if 
\[
F\left(O\right)\sqsubseteq P\textrm{ and }G\left(P\right)\leq O.
\]
Simultaneous pre-fixed points are also called \emph{mutually-inductive
points} of $F$ and $G$. See Figure~\vref{fig:SPreFPs} for an illustration
of simultaneous pre-fixed points.
\begin{itemize}
\item It should be immediately noted from the definition of simultaneous
pre-fixed points that, generally-speaking, a single point $O\in\mathbb{O}$
can be paired with \emph{more} \emph{than} \emph{one} point $P\in\mathbb{P}$
such that $O$ and $P$ form a single pair of simultaneous pre-fixed
points of $F$ and $G$. Symmetrically, a single point $P\in\mathbb{P}$
can also be paired with more than one point $O\in\mathbb{O}$ to form
such a pair. (See Figure~\vref{fig:Mult-SPreFPs} for an illustration.)
\item As such, two functions $PreFP_{F,G}:\mathbb{O}\rightarrow\wp\left(\mathbb{P}\right)$
and $PreFP_{G,F}:\mathbb{P}\rightarrow\wp\left(\mathbb{O}\right)$
that compute these sets of points in $\mathbb{P}$ and $\mathbb{O}$,
respectively, can be derived from $F$ and $G$. In particular, we
have
\[
PreFP_{F,G}\left(O\right)=\left\{ P\in\mathbb{P}|F\left(O\right)\sqsubseteq P\textrm{ and }G\left(P\right)\leq O\right\} ,\textrm{ and}
\]
\[
PreFP_{G,F}\left(P\right)=\left\{ O\in\mathbb{O}|F\left(O\right)\sqsubseteq P\textrm{ and }G\left(P\right)\leq O\right\} .
\]
Note that for some points $O\in\mathbb{O}$, the set $PreFP_{F,G}\left(O\right)$
can be the empty set $\phi$, meaning that such points are not paired
with any points in $\mathbb{P}$ so as to form simultaneous pre-fixed
points of $F$ and $G$. Symmetrically, the same observation holds
for some points $P\in\mathbb{P}$ and their images $PreFP_{G,F}\left(P\right)$.\footnote{Note that in standard induction---or, more accurately, in the encoding
of standard induction using mutual induction---(ultimately due to
the transitivity of $\leq$ and $\sqsubseteq$) it can be said that
a standard pre-fixed point $O\in\mathbb{O}$ is ``paired'' with
one point, namely itself, to form a pair of simultaneous pre-fixed
points that encodes the standard pre-fixed point. In other words,
standard pre-fixed points in standard induction, when encoded as mutual
induction, correspond to simultaneous pre-fixed points (\emph{e.g.},
using the encoding we mentioned in Footnote~\ref{fn:Mut->Stand}),
and vice versa (\emph{i.e.}, simultaneous pre-fixed points, in such
an encoding, will correspond to standard pre-fixed points).}
\end{itemize}
Dually, two points $O\in\mathbb{O}$, $P\in\mathbb{P}$ are called
\emph{simultaneous} (or \emph{mutual} or \emph{reciprocal}) \emph{post-fixed
points} of $F$ and $G$ if $P$ is less than or equal to the $F$-image
of $O$ \emph{and} $O$ is less than or equal to the $G$-image of
$P$ (that is, intuitively, if ``the two points are less than their
two own images''), \emph{i.e.}, if 
\[
P\sqsubseteq F\left(O\right)\textrm{ and }O\leq G\left(P\right).
\]

Simultaneous post-fixed points are also called \emph{mutually-coinductive
points} of $F$ and $G$.
\begin{itemize}
\item Like for simultaneous pre-fixed points, a point $O\in\mathbb{O}$
or $P\in\mathbb{P}$ can be paired with \emph{more} \emph{than} \emph{one}
point of the other poset to form a single pair of simultaneous post-fixed
points of $F$ and $G$. (Similar to $PreFP_{F,G}$ and $PreFP_{G,F}$,
two functions $PostFP_{F,G}:\mathbb{O}\rightarrow\wp\left(P\right)$
and $PostFP_{G,F}:\mathbb{P}\rightarrow\wp\left(O\right)$ that compute
these sets can be derived from $F$ and $G$.)
\end{itemize}
Two points $O\in\mathbb{O}$, $P\in\mathbb{P}$ are called \emph{simultaneous}
(or \emph{mutual} or \emph{reciprocal}) \emph{fixed points} of $F$
and $G$ if the $F$-image of $O$ is equal to $P$ \emph{and} the
$G$-image of $P$ is equal to $O$, \emph{i.e.}, if 
\[
F\left(O\right)=P\textrm{ and }G\left(P\right)=O.
\]
 (As such, two simultaneous fixed points of $F$ and $G$ are, simultaneously,
simultaneous pre-fixed points of $F$ and $G$ \emph{and} simultaneous
post-fixed points of $F$ and $G$.)
\begin{itemize}
\item Unlike for simultaneous pre-fixed and post-fixed points, a point in
$\mathbb{O}$ or in $\mathbb{P}$ can be paired with \emph{only} \emph{one}
point of the other poset to form a pair of simultaneous fixed points
of $F$ and $G$.\medskip{}
\end{itemize}

\paragraph{Continuity}

Unlike the case for standard (co)induction, where the monotonicity
of generators is enough to guarantee the existence of \emph{least}
and \emph{greatest} standard fixed points, due to the possibility
of multiple pairings in $PreFP$ and $PostFP$, the monotonicity of
two mutual endofunctions $F$ and $G$ seems \emph{not} enough for
proving the existence of least and greatest simultaneous fixed points
of $F$ and $G$.\footnote{It so seemed to us in earlier versions of this paper---but now see
Appendix~\ref{sec:Requiring-Only-Monotonicity}.} However (as can be seen in the proofs in Appendix~\ref{sec:Theorems-and-Proofs}),
the \emph{continuity} of the mutual endofunctions $F$ and $G$ does
guarantee the existence of such fixed points.%
{} Hence, before proceeding with the formulation of mutual (co)induction
we now introduce the important and useful concept of continuity.
\begin{defn}[Continuous Mutual Endofunctions]
\label{def:Cont}Two mutual endofunctions $F:\mathbb{O}\rightarrow\mathbb{P}$
and $G:\mathbb{P}\rightarrow\mathbb{O}$ defined over two posets $\left(\mathbb{O},\leq\right)$
and $\left(\mathbb{P},\sqsubseteq\right)$ are \emph{continuous} \emph{mutual}
\emph{endofunctions} if and only if for all subsets $\mathbb{M}\subseteq\mathbb{O}$
and $\mathbb{N}\subseteq\mathbb{P}$ whenever greatest lower bounds
(also called \emph{glbs}) $\wedge\mathbb{M}\in\mathbb{O}$ and $\sqcap\mathbb{N}\in\mathbb{P}$
and least upper bounds (also called \emph{lubs}) $\vee\mathbb{M}\in\mathbb{O}$
and $\sqcup\mathbb{N}\in\mathbb{P}$ exist (in $\mathbb{O}$ and in
$\mathbb{P}$) then the corresponding points $\sqcap F\left(\mathbb{M}\right)\in\mathbb{P}$,
$\wedge G\left(\mathbb{N}\right)\in\mathbb{O}$, $\sqcup F\left(\mathbb{M}\right)\in\mathbb{P}$,
and $\vee G\left(\mathbb{N}\right)\in\mathbb{O}$ also exist (in $\mathbb{O}$
and in $\mathbb{P}$), and further, more significantly, we have 
\[
\sqcap F\left(\mathbb{M}\right)=F\left(\wedge\mathbb{M}\right)\textrm{ and }\wedge G\left(\mathbb{N}\right)=G\left(\sqcap\mathbb{N}\right),\textrm{ and}
\]
\[
\sqcup F\left(\mathbb{M}\right)=F\left(\vee\mathbb{M}\right)\textrm{ and }\vee G\left(\mathbb{N}\right)=G\left(\sqcup\mathbb{N}\right).
\]
\end{defn}
As such, if two mutual endofunctions $F:\mathbb{O}\rightarrow\mathbb{P}$
and $G:\mathbb{P}\rightarrow\mathbb{O}$ are continuous, then they
are said to \emph{preserve} the glbs and lubs in $\mathbb{O}$ and
$\mathbb{P}$, whenever these points exist.\footnote{In category theory jargon, if $F$ and $G$ are continuous then $F$
and $G$ are said to \emph{commute} with the meet/glb ($\wedge/\sqcap$)
and join/lub ($\vee/\sqcup$) operations.} (Even though somewhat similar, but the notion of continuity we use
here is stricter and more uniform than the notion of Scott-continuity
used in domain theory, which asserts that mappings preserve only \emph{lubs/joins}
of only \emph{directed} subsets~\cite{Scott82,GunterHandbook90,Abramsky94,Gierz2003,DomTheoryIntro}.)

Figure~\vref{fig:Cont-FG} illustrates \emph{meet-continuous} (also
called \emph{semi-continuous})\emph{ }mutual endofunctions $F$ and
$G$, which preserve only the glbs in $\mathbb{O}$ and $\mathbb{P}$
whenever they exist. Dually, \emph{join-continuous} mutual endofunctions
preserve only the lubs in $\mathbb{O}$ and $\mathbb{P}$ whenever
they exist. As such, mutual endofunctions are (fully) continuous if
and only if they are meet-continuous and join-continuous.

The continuity of two mutual endofunctions has a number of useful
implications, some of which we use to prove that continuity guarantees
the existence of least and greatest simultaneous fixed points.
\begin{itemize}
\item First, if $F$ and $G$ are continuous mutual endofunctions then $F$
and $G$ are also monotonic mutual endofunctions (Lemma~\ref{lem:cont-mono}
in Appendix~\ref{sec:Theorems-and-Proofs}).
\begin{itemize}
\item Note that if $F$ and $G$ are monotonic but \emph{not} continuous
then we can only assert that the points $F\left(\wedge\mathbb{M}\right)$
and $G\left(\sqcap\mathbb{N}\right)$ are lower bounds of the sets
$F\left(\mathbb{M}\right)$ and $G\left(\mathbb{N}\right)$, but not
that they are necessarily the \emph{greatest} lower bounds of these
sets, and we can also only assert that the elements $F\left(\vee\mathbb{M}\right)$
and $G\left(\sqcup\mathbb{N}\right)$ are upper bounds of the two
sets, respectively, but not that they are necessarily the \emph{least}
upper bounds of these sets. More precisely, if $F$ and $G$ are monotonic,
but not necessarily continuous, then we only have 
\[
F\left(\wedge\mathbb{M}\right)\sqsubseteq\sqcap F\left(\mathbb{M}\right)\textrm{ and }G\left(\sqcap\mathbb{N}\right)\leq\wedge G\left(\mathbb{N}\right),\textrm{ and}
\]
\[
\sqcup F\left(\mathbb{M}\right)\sqsubseteq F\left(\vee\mathbb{M}\right)\textrm{ and }\vee G\left(\mathbb{N}\right)\leq G\left(\sqcup\mathbb{N}\right).
\]
Compared to monotonicity, as such, the continuity of $F$ and $G$
can be seen as requiring or asserting the \emph{equality} of these
points whenever they exist, intuitively thereby ``allowing no elements
in between''. Continuity, thus, is said to allow functions $F$ and
$G$ that ``have no jumps'' or ``have no surprises''.%
\end{itemize}
\item Second, if $F$ and $G$ are continuous mutual endofunctions, then
the compositions $G\circ F$ and $F\circ G$ are (standard) continuous
endofunctions, and they are monotonic endofunctions too (Lemma~\ref{lem:comp-mono}
in Appendix~\ref{sec:Theorems-and-Proofs}).
\begin{itemize}
\item Further, if $\mathbb{O}$ and $\mathbb{P}$ are complete lattices,
then the composition functions $G\circ F:\mathbb{O}\rightarrow\mathbb{O}$
and $F\circ G:\mathbb{P}\rightarrow\mathbb{P}$ (like any standard
monotonic endofunctions over complete lattices) have standard pre-fixed
points and standard post-fixed points in $\mathbb{O}$ and $\mathbb{P}$
respectively. Even further, the components $O\in\mathbb{O}$ and $P\in\mathbb{P}$
of simultaneous fixed points of $F$ and $G$ are always, each component
individually, among the standard fixed points of the compositions
$G\circ F$ and $F\circ G$ (Lemma~\ref{lem:comp-fp} in Appendix~\ref{sec:Theorems-and-Proofs}).
The converse, however, does not necessarily hold.
\end{itemize}
\item Third, if $F$ and $G$ are continuous mutual endofunctions then $F$
and $G$ (and their compositions) map complete lattices to complete
lattices (note that, by definition, the empty set is not a complete
lattice, and that a singleton poset is a trivial complete lattice).
In other words, if subsets $\mathbb{M}\subseteq\mathbb{O}$ and $\mathbb{N}\subseteq\mathbb{P}$
happen to be complete sublattices of $\mathbb{O}$ and $\mathbb{P}$
then $F\left(\mathbb{M}\right)$ and $G\left(\mathbb{N}\right)$ are
either empty or are complete \emph{sublattices} of $\mathbb{P}$ and
$\mathbb{O}$ respectively (Lemma~\ref{lem:cont-complatt} in Appendix~\ref{sec:Theorems-and-Proofs}).
\item Fourth and finally, if $F$ and $G$ are continuous mutual endofunctions,
then for all $O\in\mathbb{O}$ and $P\in\mathbb{P}$ the posets $PreFP_{F,G}\left(O\right)$
and $PreFP_{G,F}\left(P\right)$ are complete lattices (Lemma~\ref{lem:PreFP-img-complatt}
in Appendix~\ref{sec:Theorems-and-Proofs}). Continuity also guarantees%
{} the existence of least simultaneous pre-fixed points (Lemma~\ref{lem:PreFP-complatt}
in Appendix~\ref{sec:Theorems-and-Proofs}). We use continuity to
prove that the least simultaneous pre-fixed points are also the least
simultaneous \emph{fixed} points (\nameref{thm:SFP} in Appendix~\ref{sec:Theorems-and-Proofs}).\footnote{Note that singleton posets are trivial complete lattices. As such,
this condition is already satisfied in an encoding of standard (co)induction
using mutual (co)induction. That is because in such an encoding a
given pre-fixed point is an element that is paired with precisely
one element, namely itself, to form a pair of simultaneous pre-fixed
points. Thus, in mutual (co)induction encodings of standard (co)induction,
the sets $PreFP_{F,G}\left(O\right)$, $PreFP_{G,F}\left(P\right)$,
$PostFP_{F,G}\left(O\right)$, and $PostFP_{G,F}\left(P\right)$ are
always either $\phi$ or singleton posets for all $O\in\mathbb{O}$
and all $P\in\mathbb{P}$.}
\end{itemize}
\medskip{}

Having made a digression to discuss the continuity of mutual endofunctions,
and some of its implications, we now resume our formulation of mutual
(co)induction.

Now, if $\leq$ is a complete lattice over $\mathbb{O}$ and $\sqsubseteq$
is a complete lattice over $\mathbb{P}$ (\emph{i.e.}, if $\leq$
is an ordering relation where meets $\wedge$ and joins $\vee$ of
\emph{all} subsets of $\mathbb{O}$ are guaranteed to exist in $\mathbb{O}$,
and similarly for $\sqsubseteq$, $\sqcap$, and $\sqcup$ in $\mathbb{P}$)
and if $F$ and $G$ are\emph{ continuous} mutual endofunctions over
$\mathbb{O}$ and $\mathbb{P}$ then we have the following:
\begin{itemize}
\item $F$ and $G$ are called \emph{simultaneous} \emph{generating} \emph{functions}
(or \emph{simultaneous generator}s or \emph{mutual generators }or
\emph{reciprocal generators}),
\item the \emph{least simultaneous pre-fixed points }of $F$ and $G$, called
$\mu_{F}$ and $\mu_{G}$, exist in $\mathbb{O}$ and $\mathbb{P}$,
\item together, the points $\mu_{F}$ and $\mu_{G}$ are also the \emph{least
simultaneous fixed points} of $F$ and $G$ (as we prove in~\nameref{thm:SFP}),
\item the \emph{greatest simultaneous post-fixed points} of $F$ and $G$,
called $\nu_{F}$ and $\nu_{G}$, exist in $\mathbb{O}$ and $\mathbb{P}$,
and 
\item together, the points $\nu_{F}$ and $\nu_{G}$ are also the \emph{greatest
simultaneous fixed points} of $F$ and $G$ (as we prove in~\nameref{thm:SFP}).
\end{itemize}
Further, given that $\mu_{F}$ and $\mu_{G}$ are the \emph{least}
simultaneous pre-fixed points\emph{ }of $F$ and $G$ and $\nu_{F}$
and $\nu_{G}$ are the \emph{greatest }simultaneous post-fixed points\emph{
}of $F$ and $G$, for any element $O\in\mathbb{O}$ and $P\in\mathbb{P}$
we have:
\begin{itemize}
\item (\emph{mutual induction})\phantom{co} if $F\left(O\right)\sqsubseteq P$
and $G\left(P\right)\leq O$, then $\mu_{F}\leq O$ and $\mu_{G}\sqsubseteq P$,\medskip{}
\\
which, in words, means that if $O$ and $P$ are simultaneous pre-fixed/inductive/large
points of $F$ and $G$, then points $\mu_{F}$ and $\mu_{G}$ are
less than or equal to $O$ and $P$ (\emph{i.e.}, $\mu_{F}$ and $\mu_{G}$
are the smallest simultaneously-large points of $F$ and $G$), and,
\item (\emph{mutual coinduction}) if $P\sqsubseteq F\left(O\right)$ and
$O\leq G\left(P\right)$, then $O\leq\nu_{F}$ and $P\sqsubseteq\nu_{G}$,\medskip{}
\\
which, in words, means that if $O$ and $P$ are simultaneous post-fixed/coinductive/small
points of $F$ and $G$, then points $O$ and $P$ are less than or
equal to points $\nu_{F}$ and $\nu_{G}$ (\emph{i.e.}, $\nu_{F}$
and $\nu_{G}$ are the largest simultaneously-small points of $F$
and $G$).
\end{itemize}

\section{\label{sec:Related-Work}Related Work}

The work closest to the one we present here seems to be that of Paulson,
presented \emph{e.g.} in~\cite{Paulson1998}, to support the development
of the Isabelle proof assistant~\cite{Isabelle2015}. We already
mentioned in $\mathsection$\ref{sec:Introduction} the influence
of Paulson's work on motivating our definition of mutual coinduction
(we discuss this motivation, and others, in more detail in~\cite{AbdelGawad2019e}).
Due to Paulson's interest in making some form of coinduction available
in systems such as Isabelle~\cite{Paulson2019}, Paulson was interested
only in the set-theoretic definition of standard (co)induction and
mutual (co)induction~\cite{Paulson1995} (the set-theoretic definition,
according to~\cite{Paulson2019}, `was definitely the easiest to
develop, especially during the 1990s, when no general mechanisation
of lattice theory was even available').

More technically, it seems Paulson was interested in requiring generators
to be monotonic (as opposed to requiring their continuity, which is
sometimes viewed as an undesirably strong assumption~\cite[p.72]{Priestley2002}).
As such, Paulson used monotonic generators over \emph{the powerset
of a disjoint sum} domain so as to define (or, rather, encode) mutual
set-theoretic (co)induction using standard set-theoretic (co)induction\footnote{Via noting an order-isomorphism between $\wp\left(\sum_{i}D_{i}\right)$
(the powerset of a disjoint sum) and $\prod_{i}\wp\left(D_{i}\right)$
(the product of powersets). In our opinion the use of the disjoint
sum in Paulson's definition of mutual (co)induction, while technically
clever, is unnatural and unintuitive (as demonstrated, \emph{e.g.},
in the examples of~\cite[$\mathsection$4.5]{Paulson1995}).}. Additionally, in~\cite[p.33]{Paulson1995} Paulson stated that
the standard Fixed Point Theorem has been proven in Isabelle `only
for a simple powerset lattice,' which made Paulson limit his interests
to such ``simple powerset lattices,'' even when the theorem applies
to \emph{any} complete lattice~\cite{Davey2002,Rol2002}, not just
to a particular instance (\emph{i.e.}, not only to powerset lattices).
As such, in summary, it seems to us that Paulson, for considerations
related to his interests in automated theorem proving, was not interested
in considering the continuity of generators in his work.

While not particularly aimed at semantics, Paulson's work on mutual
(co)induction, indirectly, provided semantics for mutual (co)inductive
datatypes in \emph{functional} programming languages (\emph{e.g.}
ML), where mutual datatype constructors were modeled by mutual generators.
Functional programming languages, however, are largely \emph{struc\-turally}-typed
and structurally-sub\-typed. Given that we have a general interest,
rather, in providing semantics for datatypes in mainstream \emph{object-oriented}
programming languages (such as Java, C\#, Scala, and Kotlin), which
are typically \emph{nomin\-ally}-typed and nominally-sub\-typed
programming languages, our interest is more in the order-theoretic
formulation of mutual (co)induction.\footnote{See~\cite{AbdelGawad2019f,AbdelGawad2019h}, \cite{AbdelGawad2019d}
and also~\cite{AbdelGawad2019a,AbdelGawad2019b,AbdelGawad2019e}
for a discussion of why order-theory and category-theory seem to be
more suited than set-theory for modeling nominally-typed and nominally-subtyped
programming languages.}

Given the difference between our work and Paulson's work regarding
how mutual (co)induction is technically formulated, motivated by the
different goals behind the two formulations, we anticipate that results
in both works (particularly regarding the definition of simultaneous
fixed points) are not in one-to-one correspondence with each other.
We, however, keep a more detailed comparison of the formulation of
mutual coinduction we present here to the formulation of Paulson---which
may reveal more similarities and resolve some of the differences between
the two formalizations---for future work.

Another work that is related to ours is the work presented in~\cite{Blanchette2014}.
Since the work in~\cite{Blanchette2014} builds on that of Paulson,
and has similar aims in supporting the development of Isabelle, the
work in~\cite{Blanchette2014} adopts the same specific set-theoretic
view of mutual (co)induction as that of Paulson.

\section{\label{sec:Conclusion-and-FW}Conclusion and Future Work}

Standard induction (which includes the standard notions of \emph{mathematical}
\emph{induction} and \emph{structural induction}) is well-known, and
it is relatively easy reason about. Standard \emph{coinduction} is
also known, but it is a bit shrouded in mystery and unreasonability\footnote{Likely due to its strong connections with negation~\cite{Kozen2016,AbdelGawad2019a}.}.
Mutual induction is also known, if somewhat to a lesser extent than
standard induction. Mutual induction is a bit harder to reason about
than standard induction however. Mutual \emph{co}induction---our
main interest in this paper---is, however, almost unknown, and has
(so far) been perceived as being both mysterious and hard to reason
about. We hope that this paper, via presenting the definition of mutual
coinduction as a simple generalization of the order-theoretic definition
of standard coinduction, has put mutual coinduction into more familiar
light, and that, by presenting a proof of a related proof principle,
it has also made mutual coinduction simpler to reason about.%

While the continuity condition on generators in our formulation of
mutual (co)in\-duction is sufficient for proving the existence of
least and greatest simultaneous fixed points in complete lattices
(while monotonicity seems insufficient), yet it is not clear to us
whether (full) continuity is \emph{necessary} for such a proof. It
may be useful to consider, in some future work, the possibility of
relaxing the continuity condition, while still guaranteeing the existence
of simultaneous fixed points. In particular, it may be useful to consider
the effect of having other more liberal continuity conditions, such
as Scott-continuity, on the existence of simultaneous fixed points.
It may be also useful to study simultaneous pre-fixed points and simultaneous
post-fixed points that are \emph{not} necessarily fixed points (as
is done, for example, in the study of algebras and coalgebras in category
theory).

As another possible future work that can build on the definition of
mutual coinduction we present here, it may be useful to consider defining
\emph{infinite mutual coinduction}, which, as we conceive it, generalizes
mutual coinduction to involve an infinite (countable, or even uncountable!)
number of orderings and generators. As of the time of this writing,
we are not aware of immediate applications of infinite mutual coinduction.
Given the mystery surrounding both coinduction and some particular
areas of science, though, we conjecture that infinite mutual coinduction
(if it is indeed reasonably definable) may have applications in areas
of science such as quantum physics, \emph{e.g.}, by it offering mathematical
models of quantum phenomena such as superposition, entanglement, and/or
interference\footnote{For example, Penrose calls in~\cite{Penrose2004} for some new kind
of mathematics for having an accurate understanding of our universe.
Perhaps infinite mutual coinduction is a piece of such ``new math.''}. In agreement with this conjecture, we also intuit and conjecture
that infinite mutual coinduction may have an impact on quantum computing,
including reasoning about quantum programs and quantum software.

\bibliographystyle{plain}

\appendix

\section{\label{sec:Theorems-and-Proofs}Lemmas, Theorems and Proofs}

In this appendix we present proofs for the lemmas and theorems of~$\mathsection$\ref{sec:Mutual-(Co)Induction}.

\subsection{Supporting Lemmas}
\begin{lem}[Continuous Functions are Monotonous]
\label{lem:cont-mono}If $F:\mathbb{O}\rightarrow\mathbb{P}$ and
$G:\mathbb{P}\rightarrow\mathbb{O}$ are continuous mutual endofunctions
over posets $\left(\mathbb{O},\leq\right)$ and $\left(\mathbb{P},\sqsubseteq\right)$
then $F$ and $G$ are also monotonic, \emph{i.e.}, for all $O_{1},O_{2}\in\mathbb{O}$
\[
O_{1}\leq O_{2}\implies F\left(O_{1}\right)\sqsubseteq F\left(O_{2}\right)
\]
and for all $P_{1},P_{2}\in\mathbb{P}$
\[
P_{1}\sqsubseteq P_{2}\implies G\left(P_{1}\right)\leq G\left(P_{2}\right).
\]
\end{lem}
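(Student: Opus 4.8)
The plan is to instantiate the continuity hypothesis (Definition~\ref{def:Cont}) at a carefully chosen two-element subset. Given $O_1,O_2\in\mathbb{O}$ with $O_1\leq O_2$, consider $\mathbb{M}=\left\{O_1,O_2\right\}\subseteq\mathbb{O}$. In any poset the relation $O_1\leq O_2$ forces $\wedge\mathbb{M}=O_1$ and $\vee\mathbb{M}=O_2$, so the glb and lub of $\mathbb{M}$ exist. Hence the continuity of $F$ applies to $\mathbb{M}$ and guarantees that $\sqcap F\left(\mathbb{M}\right)$ and $\sqcup F\left(\mathbb{M}\right)$ exist in $\mathbb{P}$, with $\sqcup F\left(\mathbb{M}\right)=F\left(\vee\mathbb{M}\right)=F\left(O_2\right)$ (equivalently $\sqcap F\left(\mathbb{M}\right)=F\left(\wedge\mathbb{M}\right)=F\left(O_1\right)$).

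Now $F\left(\mathbb{M}\right)=\left\{F\left(O_1\right),F\left(O_2\right)\right\}$, and $\sqcup F\left(\mathbb{M}\right)$ is by definition an upper bound of this set, so $F\left(O_1\right)\sqsubseteq\sqcup F\left(\mathbb{M}\right)=F\left(O_2\right)$, which is precisely the required monotonicity of $F$. (One may argue dually with meets: $\sqcap F\left(\mathbb{M}\right)$ is a lower bound of $F\left(\mathbb{M}\right)$, hence $F\left(O_1\right)=\sqcap F\left(\mathbb{M}\right)\sqsubseteq F\left(O_2\right)$.) The argument for $G$ is completely symmetric: given $P_1\sqsubseteq P_2$ in $\mathbb{P}$, take $\mathbb{N}=\left\{P_1,P_2\right\}$, note $\sqcap\mathbb{N}=P_1$ and $\sqcup\mathbb{N}=P_2$, and read off $G\left(P_1\right)\leq\vee G\left(\mathbb{N}\right)=G\left(\sqcup\mathbb{N}\right)=G\left(P_2\right)$ from the continuity equations for $G$.

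There is essentially no hard step here; the only point requiring care is the choice of witness subset. Using a two-element chain $\left\{O_1,O_2\right\}$ (rather than an arbitrary subset) is what makes its glb and lub available ``for free'', so that the existence precondition in Definition~\ref{def:Cont} is automatically satisfied and the continuity equations collapse to the purely binary statements $F\left(O_1\right)\sqcup F\left(O_2\right)=F\left(O_2\right)$ and $F\left(O_1\right)\sqcap F\left(O_2\right)=F\left(O_1\right)$, from which the implication defining monotonicity is immediate.
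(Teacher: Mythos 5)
Your proof is correct and is essentially the paper's own argument: instantiate the continuity definition at the two-element chain $\left\{O_{1},O_{2}\right\}$, whose glb and lub exist automatically, and read monotonicity off the resulting equality. The only (immaterial) difference is that the paper works with the meet-preservation equation $\sqcap F\left(\mathbb{M}\right)=F\left(\wedge\mathbb{M}\right)=F\left(O_{1}\right)$ and the fact that the glb is a lower bound, whereas you primarily use the dual join-preservation leg, mentioning the meet version parenthetically.
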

\begin{proof}
Let $O_{1},O_{2}\in\mathbb{O}$ such that $O_{1}\leq O_{2}$ (\emph{i.e.},
in Definition~\ref{def:Cont}, take $\mathbb{M}=\left\{ O_{1},O_{2}\right\} $
where $O_{1}\leq O_{2}$). From the definition of $\sqcap$ (as a
greatest \emph{lower} bound), we particularly have 
\begin{equation}
\bigsqcap F\left(\left\{ O_{1},O_{2}\right\} \right)=\bigsqcap\left\{ F\left(O_{1}\right),F\left(O_{2}\right)\right\} \sqsubseteq F\left(O_{2}\right).\label{eq:cont-mono}
\end{equation}
 By the continuity of $F$, we also have $\bigsqcap F\left(\left\{ O_{1},O_{2}\right\} \right)=F\left(\bigwedge\left\{ O_{1},O_{2}\right\} \right)$.
Given that $O_{1}\leq O_{2}$, we have $\bigwedge\left\{ O_{1},O_{2}\right\} =O_{1}\wedge O_{2}=O_{1}$
and thus, further, we have $F\left(\bigwedge\left\{ O_{1},O_{2}\right\} \right)=F\left(O_{1}\right)$.
As such, we have $\bigsqcap F\left(\left\{ O_{1},O_{2}\right\} \right)=F\left(O_{1}\right)$.
Substituting for $\bigsqcap F\left(\left\{ O_{1},O_{2}\right\} \right)$
(the l.h.s.) in~(\ref{eq:cont-mono}), we get $F\left(O_{1}\right)\sqsubseteq F\left(O_{2}\right)$,
as required.

Similarly for $G$.
\end{proof}
\begin{lem}[Composition Preserves Monotonicity and Continuity]
\label{lem:comp-mono}If $F:\mathbb{O}\rightarrow\mathbb{P}$ and
$G:\mathbb{P}\rightarrow\mathbb{O}$ are monotonic (continuous) mutual
endofunctions over posets $\mathbb{O}$ and $P$, then the compositions
$G\circ F:\mathbb{O}\rightarrow\mathbb{O}$ and $F\circ G:\mathbb{P}\rightarrow\mathbb{P}$
are monotonic (continuous) endofunctions over $\mathbb{O}$ and $\mathbb{P}$
respectively.
\end{lem}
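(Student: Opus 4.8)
The plan is to treat the two assertions in turn, obtaining each property of the compositions by \emph{chaining} the corresponding property of $F$ with that of $G$ --- the general principle being that a composite of arrows inherits any property that is stable under composition. Throughout, the argument for $F\circ G$ will be the exact mirror image of the one for $G\circ F$, with the roles of $\left(\mathbb{O},\leq\right)$ and $\left(\mathbb{P},\sqsubseteq\right)$ (and of $\wedge,\vee$ versus $\sqcap,\sqcup$) interchanged, so it suffices to carry out the $G\circ F$ case.

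First I would dispatch monotonicity. Assuming $F$ and $G$ monotonic, take $O_{1}\leq O_{2}$ in $\mathbb{O}$; monotonicity of $F$ gives $F\left(O_{1}\right)\sqsubseteq F\left(O_{2}\right)$, and applying monotonicity of $G$ to this inequality in $\mathbb{P}$ yields $G\left(F\left(O_{1}\right)\right)\leq G\left(F\left(O_{2}\right)\right)$, that is $\left(G\circ F\right)\left(O_{1}\right)\leq\left(G\circ F\right)\left(O_{2}\right)$. This part is entirely routine. (In the continuous case, monotonicity of $G\circ F$ then comes for free, since by Lemma~\ref{lem:cont-mono} continuous mutual endofunctions are monotonic.)

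Next, the continuity case. Fix an arbitrary subset $\mathbb{M}\subseteq\mathbb{O}$ for which $\bigwedge\mathbb{M}$ and $\bigvee\mathbb{M}$ exist in $\mathbb{O}$. By continuity of $F$ (Definition~\ref{def:Cont}), the points $\bigsqcap F\left(\mathbb{M}\right)$ and $\bigsqcup F\left(\mathbb{M}\right)$ exist in $\mathbb{P}$ and satisfy $\bigsqcap F\left(\mathbb{M}\right)=F\left(\bigwedge\mathbb{M}\right)$ and $\bigsqcup F\left(\mathbb{M}\right)=F\left(\bigvee\mathbb{M}\right)$. Now regard $F\left(\mathbb{M}\right)$ as a subset of $\mathbb{P}$ whose glb and lub exist; continuity of $G$ then gives that $\bigwedge G\left(F\left(\mathbb{M}\right)\right)$ and $\bigvee G\left(F\left(\mathbb{M}\right)\right)$ exist in $\mathbb{O}$, with $\bigwedge G\left(F\left(\mathbb{M}\right)\right)=G\left(\bigsqcap F\left(\mathbb{M}\right)\right)$ and $\bigvee G\left(F\left(\mathbb{M}\right)\right)=G\left(\bigsqcup F\left(\mathbb{M}\right)\right)$. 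Observing that $G\left(F\left(\mathbb{M}\right)\right)=\left(G\circ F\right)\left(\mathbb{M}\right)$ as sets, and substituting the two equalities from the $F$ step, I obtain $\bigwedge\left(G\circ F\right)\left(\mathbb{M}\right)=G\left(F\left(\bigwedge\mathbb{M}\right)\right)=\left(G\circ F\right)\left(\bigwedge\mathbb{M}\right)$ and, symmetrically, $\bigvee\left(G\circ F\right)\left(\mathbb{M}\right)=\left(G\circ F\right)\left(\bigvee\mathbb{M}\right)$ --- which is precisely the continuity condition for $G\circ F$.

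The one point that needs care --- and the only place where the hypothesis is genuinely \emph{used} rather than merely transported --- is the bookkeeping of existence: before I may invoke continuity of $G$ on the set $F\left(\mathbb{M}\right)$, I must already know that this set has a glb and a lub in $\mathbb{P}$, and that is exactly the existence half of Definition~\ref{def:Cont} applied to $F$. Once that is secured, the equalities compose without friction, so I do not anticipate any real obstacle beyond keeping the chain of substitutions straight and remembering to state the dual argument for $F\circ G$.
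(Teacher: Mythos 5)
Your proposal is correct and follows essentially the same route as the paper: the monotonicity part is the identical chaining $O_{1}\leq O_{2}\implies F\left(O_{1}\right)\sqsubseteq F\left(O_{2}\right)\implies G\left(F\left(O_{1}\right)\right)\leq G\left(F\left(O_{2}\right)\right)$ (and its mirror image for $F\circ G$). For continuity the paper merely asserts that ``a very similar but much more tedious'' substitution argument works, and your write-up supplies exactly that argument---correctly noting the one genuine point of care, namely that the existence half of Definition~\ref{def:Cont} applied to $F$ is what licenses invoking the continuity of $G$ on the image set $F\left(\mathbb{M}\right)$.
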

\begin{proof}
By substitution, since $F\left(O_{1}\right)\in\mathbb{P}$ and $F\left(O_{2}\right)\in\mathbb{P}$
for all $O_{1},O_{2}\in\mathbb{O}$, then, by the monotonicity of
$F$ then that of $G$, we have 
\[
O_{1}\leq O_{2}\implies F\left(O_{1}\right)\sqsubseteq F\left(O_{2}\right)\implies G\left(F\left(O_{1}\right)\right)\leq G\left(F\left(O_{2}\right)\right),
\]
\emph{i.e.}, that $G\circ F$ is monotonic.

Similarly, for all $P_{1},P_{2}\in\mathbb{P}$, the monotonicity of
$G$ then of $F$ implies
\[
P_{1}\sqsubseteq P_{2}\implies G\left(P_{1}\right)\leq G\left(P_{2}\right)\implies F\left(G\left(P_{1}\right)\right)\sqsubseteq F\left(G\left(P_{2}\right)\right),
\]
\emph{i.e.}, that $F\circ G$ is monotonic.

Using a very similar but much more tedious %
argument we can prove that composition preserves continuity.
\end{proof}
\begin{lem}[Components of Simultaneous Fixed Points are Standard Fixed Points
of Compositions]
\noindent \label{lem:comp-fp}If $F:\mathbb{O}\rightarrow\mathbb{P}$
and $G:\mathbb{P}\rightarrow\mathbb{O}$ are monotonic mutual endofunctions
over posets $\mathbb{O}$ and $\mathbb{P}$, then the components of
their simultaneous pre-/post-/fixed points are standard pre-/post-/fixed
points of the compositions $G\circ F:\mathbb{O}\rightarrow\mathbb{O}$
and $F\circ G:\mathbb{P}\rightarrow\mathbb{P}$.
\end{lem}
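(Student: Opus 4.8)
The plan is to unfold the relevant definitions and use nothing beyond the monotonicity of $F$ and $G$ and the transitivity of $\leq$ and $\sqsubseteq$; in particular neither completeness of the lattices nor continuity is needed, which is exactly why the hypothesis has been weakened to plain monotonicity here. I would split the argument into the three cases (simultaneous pre-fixed, post-fixed, and fixed points), and in each case prove the claim for the $\mathbb{O}$-component with respect to $G\circ F$, the statement for the $\mathbb{P}$-component with respect to $F\circ G$ being entirely symmetric (swap the roles of $F$ and $G$, and of $\leq$ and $\sqsubseteq$).

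For the pre-fixed case, suppose $O\in\mathbb{O}$ and $P\in\mathbb{P}$ satisfy $F(O)\sqsubseteq P$ and $G(P)\leq O$. Applying monotonicity of $G$ to $F(O)\sqsubseteq P$ gives $G(F(O))\leq G(P)$, and transitivity of $\leq$ with $G(P)\leq O$ yields $(G\circ F)(O)=G(F(O))\leq O$, so $O$ is a standard pre-fixed point of $G\circ F$. Dually, monotonicity of $F$ applied to $G(P)\leq O$ gives $F(G(P))\sqsubseteq F(O)$, and transitivity of $\sqsubseteq$ with $F(O)\sqsubseteq P$ gives $(F\circ G)(P)\sqsubseteq P$.

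For the post-fixed case one runs the same computation with the inequalities reversed: from $P\sqsubseteq F(O)$ and $O\leq G(P)$, monotonicity of $G$ gives $G(P)\leq G(F(O))$, whence $O\leq(G\circ F)(O)$, and symmetrically $P\sqsubseteq(F\circ G)(P)$. For the fixed-point case I would simply note that a pair of simultaneous fixed points is at once a pair of simultaneous pre-fixed points and a pair of simultaneous post-fixed points, so the two previous cases show each component is both a standard pre-fixed and a standard post-fixed point of the relevant composition, hence a standard fixed point; alternatively, a one-line direct computation from $F(O)=P$ and $G(P)=O$ gives $(G\circ F)(O)=G(P)=O$ and $(F\circ G)(P)=F(O)=P$.

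I do not expect any genuine obstacle. The only thing to watch is bookkeeping: applying the monotonicity of $G$ (not of $F$) when reasoning about $G\circ F$, applying the monotonicity of $F$ when reasoning about $F\circ G$, and keeping the two orderings $\leq$ and $\sqsubseteq$ straight. It is also worth recording, although it is not part of the statement, that the converse fails --- a standard fixed point of $G\circ F$ need not pair with its $F$-image to form a pair of simultaneous fixed points --- and that, by Lemma~\ref{lem:comp-mono}, the compositions $G\circ F$ and $F\circ G$ are themselves monotonic, so speaking of their standard pre-/post-/fixed points is legitimate.
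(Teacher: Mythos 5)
Your proof is correct and follows essentially the same route as the paper's: apply monotonicity of $G$ (resp.\ $F$) to one of the defining inequalities, chain with the other by transitivity, and handle the post-fixed and fixed cases by duality and combination. The extra one-line direct computation for the fixed-point case and the remark that the converse fails are consistent with (and the latter is explicitly noted in) the paper.
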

\begin{proof}
For each pair of points $O\in\mathbb{O}$ and $P\in\mathbb{P}$ of
simultaneous pre-fixed points of $F$ and $G$, by the definition
of simultaneous pre-fixed points we have both inequalities 
\[
F\left(O\right)\sqsubseteq P\textrm{ and }G\left(P\right)\leq O.
\]
 Applying $G$ to both sides of the first inequality, the monotonicity
of $G$ implies that $G\left(F\left(O\right)\right)\leq G\left(P\right)$.\footnote{Note that continuity is not needed here.}
Combining this with the second inequality via the common expression
$G\left(P\right)$, we have
\[
G\left(F\left(O\right)\right)\leq G\left(P\right)\leq O.
\]
Then, by the transitivity of $\leq$,
\[
G\left(F\left(O\right)\right)\leq O
\]
\emph{i.e.}, point $O$ is a standard pre-fixed point of the composition
$G\circ F$.

Symmetrically, by applying $F$ to both sides of the second inequality,
the monotonicity of $F$ implies that
\[
F\left(G\left(P\right)\right)\sqsubseteq F\left(O\right)\sqsubseteq P.
\]
and thus, by the transitivity of $\sqsubseteq$, point $P$ is a standard
pre-fixed point of the composition $F\circ G$. (See Figure~\ref{fig:sim-prefix-latt}
for illustration.)

A dual argument implies that components of simultaneous post-fixed
points of $F$ and $G$ are standard post-fixed points of $G\circ F$
and $F\circ G$ respectively.

Combining both results implies that components of simultaneous fixed
points of $F$ and $G$ are also standard fixed points of $G\circ F$
and $F\circ G$.
\end{proof}
\begin{lem}[Continuous Functions Preserve Complete Lattices]
\label{lem:cont-complatt}If $F:\mathbb{O}\rightarrow\mathbb{P}$
and $G:\mathbb{P}\rightarrow\mathbb{O}$ are continuous mutual endofunctions
over posets $\mathbb{O}$ and $\mathbb{P}$, then for all subsets
$\mathbb{M}\subseteq\mathbb{O}$ and $\mathbb{N}\subseteq\mathbb{P}$,
if $\mathbb{M}$ and $\mathbb{N}$ are complete sublattices of $\mathbb{O}$
and $\mathbb{P}$ then the images $F\left(\mathbb{M}\right)$ and
$G\left(\mathbb{N}\right)$ are also complete sublattices of $\mathbb{P}$
and $\mathbb{O}$, respectively.
\end{lem}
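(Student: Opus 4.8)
The plan is to treat the two images separately --- I describe the argument for $F(\mathbb{M})$, the one for $G(\mathbb{N})$ being obtained by interchanging $F\leftrightarrow G$, $\mathbb{O}\leftrightarrow\mathbb{P}$, $\leq\,\leftrightarrow\,\sqsubseteq$ --- and to reduce the completeness of $F(\mathbb{M})$ to that of $\mathbb{M}$ via a preimage set. Here I read ``$\mathbb{M}$ is a complete sublattice of $\mathbb{O}$'' as: every subset of $\mathbb{M}$ has a meet and a join in $\mathbb{O}$, both of which again lie in $\mathbb{M}$. Given an arbitrary $\mathbb{S}\subseteq F(\mathbb{M})$, put $\mathbb{M}'=\{O\in\mathbb{M}\mid F(O)\in\mathbb{S}\}$. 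The first (routine) step is to verify $F(\mathbb{M}')=\mathbb{S}$: ``$\subseteq$'' is immediate from the definition of $\mathbb{M}'$, and ``$\supseteq$'' holds because each element of $\mathbb{S}\subseteq F(\mathbb{M})$ equals $F(O)$ for some $O\in\mathbb{M}$, and every such $O$ lies in $\mathbb{M}'$.

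Next, since $\mathbb{M}'\subseteq\mathbb{M}$, the meet $\bigwedge\mathbb{M}'$ and join $\bigvee\mathbb{M}'$ exist in $\mathbb{O}$ and lie in $\mathbb{M}$. Feeding $\mathbb{M}'$ into Definition~\ref{def:Cont} (continuity of $F$) yields that $\bigsqcap F(\mathbb{M}')$ and $\bigsqcup F(\mathbb{M}')$ exist in $\mathbb{P}$ with $\bigsqcap F(\mathbb{M}')=F(\bigwedge\mathbb{M}')$ and $\bigsqcup F(\mathbb{M}')=F(\bigvee\mathbb{M}')$. Since $F(\mathbb{M}')=\mathbb{S}$, this says $\bigsqcap\mathbb{S}=F(\bigwedge\mathbb{M}')$ and $\bigsqcup\mathbb{S}=F(\bigvee\mathbb{M}')$, and because $\bigwedge\mathbb{M}',\bigvee\mathbb{M}'\in\mathbb{M}$ these two points lie in $F(\mathbb{M})$. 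As $\mathbb{S}\subseteq F(\mathbb{M})$ was arbitrary, $F(\mathbb{M})$ contains the $\mathbb{P}$-meet and $\mathbb{P}$-join of each of its subsets --- and those bounds, already being members of $F(\mathbb{M})$, are also the bounds computed internally --- so $F(\mathbb{M})$ is a complete sublattice of $\mathbb{P}$; it is nonempty because $\mathbb{M}$, being a complete lattice, is nonempty.

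I expect the only delicate point to be the degenerate subset $\mathbb{S}=\phi$ and, relatedly, the exact reading of ``complete sublattice.'' With the reading above, $\mathbb{S}=\phi$ forces $\mathbb{M}'=\phi$, and one needs $\bigwedge\phi$ and $\bigvee\phi$ to exist in $\mathbb{O}$ (a top and a bottom of $\mathbb{O}$, lying in $\mathbb{M}$) before continuity applies; if one prefers not to build top/bottom of $\mathbb{O}$ into the notion of complete sublattice, the top and bottom of $F(\mathbb{M})$ should instead be exhibited directly as $F(\bigvee\mathbb{M})$ and $F(\bigwedge\mathbb{M})$ by applying Definition~\ref{def:Cont} to $\mathbb{M}$ itself (which, as a complete sublattice, contains $\bigvee\mathbb{M}$ and $\bigwedge\mathbb{M}$). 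Everything else is bookkeeping: Lemma~\ref{lem:cont-mono} supplies monotonicity should one wish to re-derive by hand that $F(\bigwedge\mathbb{M}')$ is a lower bound of $F(\mathbb{M}')$ and $F(\bigvee\mathbb{M}')$ an upper bound, but Definition~\ref{def:Cont} already hands us the stronger equalities, so no separate monotonicity argument is strictly needed.
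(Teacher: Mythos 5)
Your proof is correct and follows essentially the same route as the paper's: pull each subset $\mathbb{S}\subseteq F\left(\mathbb{M}\right)$ back to a subset of $\mathbb{M}$ whose image it is, use the completeness of $\mathbb{M}$ to obtain that subset's meet and join inside $\mathbb{M}$, and apply continuity (Definition~\ref{def:Cont}) to identify $\bigsqcap\mathbb{S}$ and $\bigsqcup\mathbb{S}$ with the $F$-images of those points, hence members of $F\left(\mathbb{M}\right)$. The only differences are minor refinements: you take the canonical full preimage $\mathbb{M}'$ instead of ``picking'' some $\mathbb{A}\subseteq\mathbb{M}$ with $F\left(\mathbb{A}\right)=\mathbb{S}$ (thereby sidestepping the axiom-of-choice concern the paper raises in a footnote), and you treat the empty subset explicitly, a degenerate case the paper passes over.
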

\begin{proof}
First, let's consider the case where $\mathbb{M}\subseteq\mathbb{O}$
is a complete lattice. By the definition of a complete lattice, the
points $\wedge\mathbb{A}$ and $\vee\mathbb{A}$ exist in $\mathbb{M}$
for all subsets $\mathbb{A}\subseteq\mathbb{M}$. As such, the points
$F\left(\wedge\mathbb{A}\right)$ and $F\left(\vee\mathbb{A}\right)$
exist in $\mathbb{P}$, and, accordingly, by the definition of $F\left(\mathbb{M}\right)$,
are also members of $F\left(\mathbb{M}\right)$. By the continuity
of $F$, we also have $F\left(\wedge\mathbb{A}\right)=\sqcap F\left(\mathbb{A}\right)$
and $F\left(\vee\mathbb{A}\right)=\sqcup F\left(\mathbb{A}\right)$.
As such, for any image set $F\left(\mathbb{A}\right)\subseteq\mathbb{P}$
a greatest lower bound, $\sqcap F\left(\mathbb{A}\right)$, and a
least upper bound, $\sqcup F\left(\mathbb{A}\right)$, exist. Further,
because of the continuity of $F$, these two points are members of
$F\left(\mathbb{M}\right)$. Thus, all subsets of $\mathbb{M}$ have
images of their glbs and lubs in $F\left(\mathbb{M}\right)$. That
by itself does not, however, prove that the set $F\left(\mathbb{M}\right)$
is a complete lattice, yet.

To prove that $F\left(\mathbb{M}\right)$ is a complete lattice, we
have to prove that the points $\sqcap\mathbb{B}$ and $\sqcup\mathbb{B}$
exist in $F\left(\mathbb{M}\right)$ for all sets $\mathbb{B}\subseteq F\left(\mathbb{M}\right)\subseteq\mathbb{P}$.
Given that $\mathbb{B}$ is subset of $F\left(\mathbb{M}\right)$,
the image of $\mathbb{M}$, then there exists some (one or more) set
$\mathbb{A}\subseteq\mathbb{M}$ such that $F\left(\mathbb{A}\right)=\mathbb{B}$.
Pick one such set $\mathbb{A}$.\footnote{Sounds like the Axiom of Choice is needed here.}
Then, for that particular $\mathbb{A}$, we have $\sqcap\mathbb{B}=\sqcap F\left(\mathbb{A}\right)$
and $\sqcup\mathbb{B}=\sqcup F\left(\mathbb{A}\right)$. Since we
proved that for all sets $\mathbb{A}\subseteq\mathbb{M}$ the points
$\sqcap F\left(\mathbb{A}\right)$ and $\sqcup F\left(\mathbb{A}\right)$
are members of $F\left(\mathbb{M}\right)$, we conclude that $\sqcap\mathbb{B}$
and $\sqcup\mathbb{B}$ are members of $F\left(\mathbb{M}\right)$
for all sets $\mathbb{B}\subseteq F\left(\mathbb{M}\right)$. As such,
the set $F\left(\mathbb{M}\right)$ is a complete lattice.

Next, by a symmetric argument, if $\mathbb{N}\subseteq\mathbb{P}$
is a complete lattice then, by the continuity of $G$, the set $G\left(\mathbb{N}\right)$
is also a complete lattice, as required.
\end{proof}
\begin{lem}[Component Images of Simultaneous Pre-/Post-Fixed Points form Complete
Lattices]
\label{lem:PreFP-img-complatt}If $F:\mathbb{O}\rightarrow\mathbb{P}$
and $G:\mathbb{P}\rightarrow\mathbb{O}$ are continuous mutual endofunctions
over complete lattices $\mathbb{O}$ and $\mathbb{P}$, then, for
all $O\in\mathbb{O}$ and $P\in\mathbb{P}$, the component image sets
$PreFP_{F,G}\left(O\right)$, $PreFP_{G,F}\left(P\right)$, $PostFP_{F,G}\left(O\right)$,
and $PostFP_{G,F}\left(P\right)$, as defined in~$\mathsection$\ref{sec:Mutual-(Co)Induction}\emph{,}
are either empty or are complete lattices.
\end{lem}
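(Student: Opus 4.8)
The plan is to prove the statement in full for one of the four sets, say $S := PreFP_{F,G}(O) = \{P\in\mathbb{P}\mid F(O)\sqsubseteq P\textrm{ and }G(P)\leq O\}$, and to obtain the other three by the obvious dualization (interchanging $\mathbb{O}$ with $\mathbb{P}$, $F$ with $G$, $\leq$ with $\sqsubseteq$, and meets with joins, as appropriate). So fix $O\in\mathbb{O}$, assume $S\neq\emptyset$, and aim to show that $S$, ordered by the restriction of $\sqsubseteq$, is a complete lattice. Because $\mathbb{O}$ and $\mathbb{P}$ are complete lattices, $\bigsqcap\mathbb{A}$ and $\bigsqcup\mathbb{A}$ exist in $\mathbb{P}$ for every $\mathbb{A}\subseteq S$; the whole task is to check that these stay inside $S$ (for nonempty $\mathbb{A}$) and that $S$ has a top and a bottom element.

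First I would exhibit the $\sqsubseteq$-least element of $S$. Picking any $P_{0}\in S$ (possible since $S\neq\emptyset$), from $F(O)\sqsubseteq P_{0}$ and monotonicity of $G$ (Lemma~\ref{lem:cont-mono}) we get $G(F(O))\leq G(P_{0})\leq O$; together with $F(O)\sqsubseteq F(O)$ this shows $F(O)\in S$, and since $F(O)\sqsubseteq P$ for all $P\in S$ by definition, $F(O)$ is the least element of $S$. (In the $PreFP_{G,F}$ case the analogous least element is $G(P)$; in the two $PostFP$ cases the image point plays the symmetric role of \emph{greatest} element instead.)

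The core step is closure under nonempty meets and joins. Let $\emptyset\neq\mathbb{A}\subseteq S$. For the first defining inequality, $F(O)$ is a lower bound of $\mathbb{A}$, so $F(O)\sqsubseteq\bigsqcap\mathbb{A}$, and picking any $P\in\mathbb{A}$ gives $F(O)\sqsubseteq P\sqsubseteq\bigsqcup\mathbb{A}$. For the second defining inequality I use continuity of $G$ (Definition~\ref{def:Cont}), which gives $G(\bigsqcap\mathbb{A})=\bigwedge G(\mathbb{A})$ and $G(\bigsqcup\mathbb{A})=\bigvee G(\mathbb{A})$; since every $G(P)\leq O$ for $P\in\mathbb{A}$, the point $O$ is an upper bound of $G(\mathbb{A})$ so $\bigvee G(\mathbb{A})\leq O$, and (here nonemptiness is needed) $\bigwedge G(\mathbb{A})\leq G(P)\leq O$ for any $P\in\mathbb{A}$. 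Hence $\bigsqcap\mathbb{A}\in S$ and $\bigsqcup\mathbb{A}\in S$. Assembling: every nonempty $\mathbb{A}\subseteq S$ has a glb and a lub lying in $S$, which are its glb and lub in $(S,\sqsubseteq)$ since they already are in $\mathbb{P}$; and the empty subset of $S$ has glb equal to the greatest element $\bigsqcup S$ of $S$ (which lies in $S$ by the previous sentence applied to $\mathbb{A}=S$, legitimate since $S\neq\emptyset$) and lub equal to the least element $F(O)$. So $(S,\sqsubseteq)$ is a complete lattice, and the three remaining sets are handled identically up to the dualization noted above.

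I do not expect a genuine obstacle, but three points need care: (i) the essential use of $\mathbb{A}\neq\emptyset$ --- without it $F(O)\sqsubseteq\bigsqcup\mathbb{A}$ and $\bigwedge G(\mathbb{A})\leq O$ can both fail --- which is exactly why the empty subset of $S$ must be treated separately through the top and bottom elements of $S$ rather than through ambient meets and joins; (ii) invoking the correct half of continuity ($\bigvee$-preservation for the join, $\bigwedge$-preservation for the meet) with the correct variance when pushing $G$ (or $F$) through a meet or a join; and (iii) tracking, across the four cases, which of the image point and the ambient meet/join serves as the top and which as the bottom of the set in question.
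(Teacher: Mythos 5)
Your proposal is correct and follows essentially the same route as the paper's proof: exhibit $F(O)$ as the least element of $PreFP_{F,G}(O)$ (via reflexivity plus monotonicity giving $G(F(O))\leq O$), then use the continuity of $G$ to show the set is closed under the ambient meets and joins of its subsets, and dualize for the other three sets. Your explicit handling of the empty subset and of the distinction between glbs/lubs taken in $S$ versus in $\mathbb{P}$ is a small tightening of the paper's argument, not a different approach.
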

\begin{proof}
For a point $O\in\mathbb{O}$ such that $PreFP_{F,G}\left(O\right)$
is nonempty, define 
\[
\mathbb{P}_{O}=PreFP_{F,G}\left(O\right)=\left\{ P\in\mathbb{P}|F\left(O\right)\sqsubseteq P\textrm{ and }G\left(P\right)\leq O\right\} .
\]
Since $\mathbb{P}$ is a complete lattice, the set $\mathbb{P}_{O}$
(as a subset of $\mathbb{P}$) has a greatest lower bound $\sqcap\mathbb{P}_{O}$
and a least upper bound $\sqcup\mathbb{P}_{O}$ that are members of
$\mathbb{P}$. To prove that $\mathbb{P}_{O}$ itself is a complete
lattice, first we prove that these two points (\emph{i.e.}, the glb
and lub of $\mathbb{P}_{O}$) are members of $\mathbb{P}_{O}$.

Let's note that for all points $O$ where $\mathbb{P}_{O}$ is non-empty
we always have the point $F\left(O\right)$ as a member of $\mathbb{P}_{O}$.
That's because, by the reflexivity of $\sqsubseteq$, we have $F\left(O\right)\sqsubseteq F\left(O\right)$.
Further, using Lemma~\ref{lem:comp-fp}, we have $G\left(F\left(O\right)\right)\leq O$.
As such, by the definition of simultaneous pre-fixed points, the points
$O\in\mathbb{O}$ and $F\left(O\right)\in\mathbb{P}$ are simultaneous
pre-fixed points of $F$ and $G$.\footnote{This brings one of the most delicate points in proving the mutual
(co)induction principles. Note that $O$ and $F\left(O\right)$ are
\emph{not} necessarily simultaneous pre-fixed points for \emph{all}
$O\in\mathbb{O}$, but that, according to Lemma~\ref{lem:comp-fp},
points $O$ and $F\left(O\right)$ are simultaneous pre-fixed points
\emph{only} whenever there is some $P\in\mathbb{P}$ (possibly equal
to $F\left(O\right)$, and possibly not) such that $O$ and $P$ are
simultaneous pre-fixed points of $F$ and $G$, \emph{i.e.}, such
that $P$ witnesses, via $G\left(P\right)$, that $G\left(F\left(O\right)\right)\leq O$.
In particular, it is \emph{not} necessarily true that for all $O\in\mathbb{O}$
we have $G\left(F\left(O\right)\right)\leq O$ (Otherwise, \emph{all}
elements $O\in\mathbb{O}$ would have formed simultaneous pre-fixed
points of $F$ and $G$, simply by pairing each $O$ with $F\left(O\right)$.
This goes counter to intuitions about mutual (co)induction, since
it would eventually lead to concluding that all points of $\mathbb{O}$---and
similarly of $\mathbb{P}$---are simultaneous \emph{fixed} points
of $F$ and $G$!).

Readers should be aware of this delicate and tricky point specific
to mutual (co)induction. That's because this point \emph{has} \emph{no}
\emph{counterpart} in standard (co)induction (or, at least, has no
obvious counterpart, since in an encoding of standard (co)induction
using mutual (co)induction, where say $G=\mathbf{1}$, we will have
$G\left(F\left(O\right)\right)=F\left(O\right)\leq O$ only if $F\left(O\right)\leq O$).} Hence, $F\left(O\right)\in\mathbb{P}_{O}$. Given that, by the definition
of $\mathbb{P}_{O}$, the point $F\left(O\right)$ is less than or
equal to all members of $\mathbb{P}_{O}$, we have 
\[
\sqcap\mathbb{P}_{O}=F\left(O\right),
\]
and, as such, the greatest lower bound of $\mathbb{P}_{O}$ is a member
of $\mathbb{P}_{O}$, \emph{i.e.}, we have $\sqcap\mathbb{P}_{O}\in\mathbb{P}_{O}$
as needed.

For $\sqcup\mathbb{P}_{O}$, we prove that it is a member of $\mathbb{P}_{O}$
more directly, using the definition of $\mathbb{P}_{O}$ and the continuity
of $G$. As for any member of $\mathbb{P}$, for $\sqcup\mathbb{P}_{O}$
to be a member of $\mathbb{P}_{O}$ we must have $F\left(O\right)\sqsubseteq\sqcup\mathbb{P}_{O}$
and $G\left(\sqcup\mathbb{P}_{O}\right)\leq O$. The first condition
is satisfied since we just proved that $F\left(O\right)$ is exactly
$\sqcap\mathbb{P}_{O}$, and, as for any set, we have $\sqcap\mathbb{P}_{O}\sqsubseteq\sqcup\mathbb{P}_{O}$
whenever such points exist, and as such we have $F\left(O\right)\sqsubseteq\sqcup\mathbb{P}_{O}$.
For the second condition (this is where continuity is needed), we
have $G\left(\sqcup\mathbb{P}_{O}\right)=\vee G\left(\mathbb{P}_{O}\right)$
by continuity. Since, by the definition of $\mathbb{P}_{O}$, all
members of $G\left(\mathbb{P}_{O}\right)$ are less than or equal
to $O$, then $O$ is an upper bound of $G\left(\mathbb{P}_{O}\right)$.
As such, we have $\vee G\left(\mathbb{P}_{O}\right)\leq O$. Hence,
for $\sqcup\mathbb{P}_{O}$, we have $G\left(\sqcup\mathbb{P}_{O}\right)=\vee G\left(\mathbb{P}_{O}\right)\leq O$,
as required. Hence, $\sqcup\mathbb{P}_{O}\in\mathbb{P}_{O}$ as needed.

Since the greatest lower bound of $\mathbb{P}_{O}$ and the least
upper bound of $\mathbb{P}_{O}$ are members of $\mathbb{P}_{O}$,
so far we can assert that the set $\mathbb{P}_{O}=PreFP_{F,G}\left(O\right)$
is a bounded poset.

To prove that $\mathbb{P}_{O}$ is not only a bounded poset but, rather,
that it is a complete lattice, we have to also consider \emph{proper}
subsets of $\mathbb{P}_{O}$. The argument for proper subsets of $\mathbb{P}_{O}$
is very similar to the one we just used for $\mathbb{P}_{O}$ (as
an improper subset of itself). In particular, let $\mathbb{N}\subset\mathbb{P}_{O}$
be some proper subset of $\mathbb{P}_{O}$ (\emph{i.e.}, is some set
of points of $\mathbb{P}$ that, paired with $O$, are simultaneous
pre-fixed points of $F$ and $G$). Again, since $\mathbb{P}$ is
a complete lattice, the elements $\sqcap\mathbb{N}$ and $\sqcup\mathbb{N}$
exist. We proceed to prove that these points are also members of $\mathbb{P}_{O}$.

In particular, to be members of $\mathbb{P}_{O}$ the two points $\sqcap\mathbb{N}$
and $\sqcup\mathbb{N}$ have to satisfy the membership condition of
$\mathbb{P}_{O}$, \emph{i.e.}, they have to form, when paired with
$O$, simultaneous pre-fixed points of $F$ and $G$. Again using
the continuity of $G$, we can see that this is true since, like we
had for $\sqcap\mathbb{P}_{O}$ and $\sqcup\mathbb{P}_{O}$, we have
\[
F\left(O\right)\sqsubseteq\sqcap\mathbb{N}\textrm{ and }G\left(\sqcap\mathbb{N}\right)=\wedge G\left(\mathbb{N}\right)\leq O,\textrm{ and}
\]
\[
F\left(O\right)\sqsubseteq\sqcup\mathbb{N}\textrm{ and }G\left(\sqcup\mathbb{N}\right)=\vee G\left(\mathbb{N}\right)\leq O.
\]

As such, points $\sqcap\mathbb{N}$ and $\sqcup\mathbb{N}$ are members
of $\mathbb{P}_{O}$. Thus, $\mathbb{P}_{O}$ is a complete lattice.

Using a symmetric argument and the continuity of $F$, we can also
prove that the set $\mathbb{O}_{P}=PreFP_{G,F}\left(P\right)$ is
either empty or is a complete lattice for all $P\in\mathbb{P}$. (Figure~\vref{fig:sim-prefix-latt}
illustrates the sets $\mathbb{P}_{O}$ and $\mathbb{O}_{P}$.)
\begin{figure}
\noindent \begin{centering}
\includegraphics[scale=0.5]{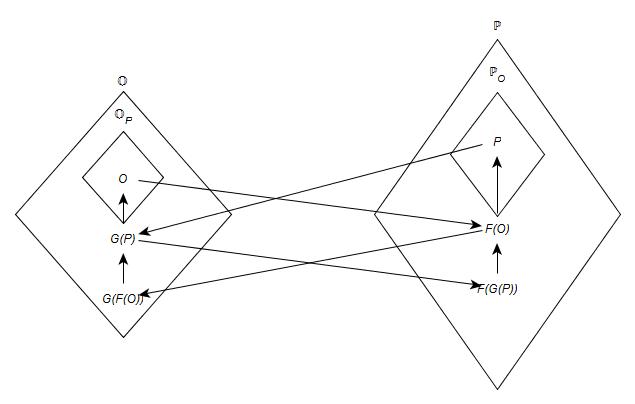}
\par\end{centering}
\caption{\label{fig:sim-prefix-latt}Illustrating Lemma~\ref{lem:comp-fp}
and Lemma~\ref{lem:PreFP-img-complatt}.}
\end{figure}

Dually, we can also prove that, for all $O\in\mathbb{O}$ and $P\in\mathbb{P}$,
the sets $PostFP_{F,G}\left(O\right)$ and $PostFP_{G,F}\left(P\right)$
are either empty or are complete lattices, as required.
\end{proof}
\begin{lem}[Components of Pre-/Post-Fixed Points form Complete Lattices]
\label{lem:PreFP-complatt}If $F:\mathbb{O}\rightarrow\mathbb{P}$
and $G:\mathbb{P}\rightarrow\mathbb{O}$ are continuous mutual endofunctions
over complete lattices $\mathbb{O}$ and $\mathbb{P}$, then the sets
\begin{equation}
\mathbb{C}=\left\{ O\in\mathbb{O}|\exists P\in\mathbb{P}.F\left(O\right)\sqsubseteq P\textrm{ and }G\left(P\right)\leq O\right\} ,\textrm{ and}\label{eq:allPreO}
\end{equation}
\begin{equation}
\mathbb{D}=\left\{ P\in\mathbb{P}|\exists O\in\mathbb{O}.F\left(O\right)\sqsubseteq P\textrm{ and }G\left(P\right)\leq O\right\} \label{eq:allPreP}
\end{equation}
of all components of simultaneous pre-fixed points are complete sublattices
of $\mathbb{O}$ and $\mathbb{P}$, respectively. Similarly for simultaneous
post-fixed points.
\end{lem}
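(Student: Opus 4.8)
The plan is to identify $\mathbb{C}$ and $\mathbb{D}$ with the sets of \emph{standard} pre-fixed points of the compositions $G\circ F:\mathbb{O}\rightarrow\mathbb{O}$ and $F\circ G:\mathbb{P}\rightarrow\mathbb{P}$, and then to prove the general fact that the set of standard pre-fixed points of a continuous endofunction over a complete lattice is a complete sublattice of that lattice. For the identification: if $O\in\mathbb{C}$ as in~(\ref{eq:allPreO}), then $O$ is a component of some pair of simultaneous pre-fixed points, so Lemma~\ref{lem:comp-fp} gives $G(F(O))\leq O$; conversely, if $G(F(O))\leq O$, then taking $P=F(O)$ we get $F(O)\sqsubseteq F(O)$ (reflexivity of $\sqsubseteq$) and $G(P)=G(F(O))\leq O$, so $O$ and $F(O)$ witness $O\in\mathbb{C}$. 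Hence $\mathbb{C}=\left\{ O\in\mathbb{O}|(G\circ F)(O)\leq O\right\} $, and, symmetrically (taking the witness $O=G(P)$ in the converse direction), $\mathbb{D}=\left\{ P\in\mathbb{P}|(F\circ G)(P)\sqsubseteq P\right\} $.

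By Lemma~\ref{lem:comp-mono}, since $F$ and $G$ are continuous mutual endofunctions the compositions $G\circ F$ and $F\circ G$ are continuous, hence monotonic, endofunctions over the complete lattices $\mathbb{O}$ and $\mathbb{P}$. It therefore suffices to show: if $H:\mathbb{L}\rightarrow\mathbb{L}$ is a continuous monotonic endofunction over a complete lattice $\mathbb{L}$, then $\mathbb{K}=\left\{ x\in\mathbb{L}|H(x)\leq x\right\} $ is a complete sublattice of $\mathbb{L}$. For closure under arbitrary meets (monotonicity alone suffices): given $\mathbb{S}\subseteq\mathbb{K}$ we have $\wedge\mathbb{S}\leq s$ for every $s\in\mathbb{S}$, so $H(\wedge\mathbb{S})\leq H(s)\leq s$ for each such $s$, whence $H(\wedge\mathbb{S})\leq\wedge\mathbb{S}$ and $\wedge\mathbb{S}\in\mathbb{K}$. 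For closure under arbitrary joins (this is where continuity is essential): given $\mathbb{S}\subseteq\mathbb{K}$, continuity gives $H(\vee\mathbb{S})=\vee H(\mathbb{S})$, and since $H(s)\leq s\leq\vee\mathbb{S}$ for every $s\in\mathbb{S}$ we get $\vee H(\mathbb{S})\leq\vee\mathbb{S}$, hence $H(\vee\mathbb{S})\leq\vee\mathbb{S}$ and $\vee\mathbb{S}\in\mathbb{K}$. Being closed under all meets and all joins as computed in $\mathbb{L}$, the set $\mathbb{K}$ is a complete sublattice (and it is nonempty, since $\top\in\mathbb{K}$ trivially). Applying this with $\mathbb{L}=\mathbb{O}$, $H=G\circ F$ and with $\mathbb{L}=\mathbb{P}$, $H=F\circ G$ settles the claims for $\mathbb{C}$ and $\mathbb{D}$.

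The post-fixed-point case is dual: using Lemma~\ref{lem:comp-fp} and reflexivity exactly as above, the set of $O\in\mathbb{O}$ that are components of simultaneous post-fixed points equals $\left\{ O\in\mathbb{O}|O\leq(G\circ F)(O)\right\} $, the standard post-fixed points of $G\circ F$, and symmetrically for $F\circ G$ on $\mathbb{P}$; and the set $\left\{ x\in\mathbb{L}|x\leq H(x)\right\} $ of standard post-fixed points of a continuous $H:\mathbb{L}\rightarrow\mathbb{L}$ is closed under arbitrary joins (monotonicity suffices, via $s\leq H(s)\leq H(\vee\mathbb{S})$) and under arbitrary meets (here continuity is essential, via $\wedge\mathbb{S}\leq\wedge H(\mathbb{S})=H(\wedge\mathbb{S})$), hence is a complete sublattice.

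The main obstacle---and the reason continuity rather than mere monotonicity is needed---is the strengthening from ``complete lattice'' to ``complete \emph{sublattice}''. Monotonicity of $G\circ F$ alone, as in the Knaster-Tarski picture for fixed points, makes the pre-fixed points closed under ambient meets and guarantees only that they form a complete lattice \emph{in their own right}, with joins that may lie strictly above the corresponding joins of $\mathbb{O}$; continuity of the compositions is exactly what forces the $\mathbb{O}$-join of any family of pre-fixed points to again be a pre-fixed point. This parallels the role continuity plays in Lemma~\ref{lem:PreFP-img-complatt} and is the step deserving the most care.
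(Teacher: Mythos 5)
Your proof is correct, and it takes a genuinely different route from the paper's. Your key move is to upgrade Lemma~\ref{lem:comp-fp} to an equivalence: using the canonical witness $P=F\left(O\right)$ (respectively $O=G\left(P\right)$) you identify $\mathbb{C}$ and $\mathbb{D}$ with the \emph{standard} pre-fixed points of the compositions $G\circ F$ and $F\circ G$, and then invoke the single-lattice fact that the pre-fixed points of a continuous endofunction on a complete lattice are closed under ambient meets (monotonicity suffices) and ambient joins (continuity, via Lemma~\ref{lem:comp-mono}, is what is needed), dually for post-fixed points. This identification is sound --- it does not conflict with the paper's warning that $G\left(F\left(O\right)\right)\leq O$ fails for general $O$, since you only use it for $O$ already known (or assumed) to satisfy that inequality --- and it makes transparent exactly where continuity enters (the ``wrong-direction'' closure: joins for pre-fixed components, meets for post-fixed ones). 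The paper instead argues directly on pairs: for a subset $\mathbb{A}\subseteq\mathbb{C}$ it constructs the partner set $\mathbb{B}\subseteq\mathbb{P}$ of all points paired with elements of $\mathbb{A}$, and shows via continuity of $F$ and $G$ separately that $\bigwedge\mathbb{A}$ paired with $\bigsqcap\mathbb{B}$ is again a simultaneous pre-fixed point, deferring join-closure and the post-fixed case to duality. What your reduction buys is modularity and brevity, a clear bookkeeping of which hypothesis does what, and a direct bridge to the standard Knaster-Tarski setting; what the paper's pair-level construction buys is an explicit witnessing partner ($\bigsqcap\mathbb{B}$) for the new component, which keeps the two-poset structure of simultaneous pre-fixed points in view and is closer in spirit to how $\mu_{F},\mu_{G}$ are then paired in Theorem~\ref{thm:SFP}. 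One small caution shared with the paper: under the paper's strong notion of continuity (all subsets, including the empty one) your closure claims cover the degenerate cases as well, but if one weakens continuity, ``closed under all joins computed in $\mathbb{L}$'' should be read as nonempty joins, with nonemptiness supplied, as you note, by $\top$ (dually $\bot$).
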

\begin{proof}
First, let's note that the definitions of $\mathbb{C}$ and $\mathbb{D}$
mean that $\mathbb{C}$ is the set of all $O\in\mathbb{O}$ where
there is some $P\in\mathbb{P}$ such that the $F$-image of $O$ is
less than $P$ and the $G$-image of $P$ is less than $O$ and, symmetrically,
that $\mathbb{D}$ is the set of all $P\in\mathbb{P}$ where there
is some $O\in\mathbb{O}$ such that the $F$-image of $O$ is less
than $P$ and the $G$-image of $P$ is less than $O$. (As such,
the variables $O$ and $P$ in Equations~(\ref{eq:allPreO}) and~(\ref{eq:allPreP})
range over the set of all simultaneous pre-fixed points of $F\textrm{ and }G$.)

Note also that $\mathbb{C}=\bigcup_{P\in\mathbb{P}}PreFP_{G,F}\left(P\right)$
and $\mathbb{D}=\bigcup_{O\in\mathbb{O}}PreFP_{F,G}\left(O\right)$,
but that Lemma~\ref{lem:PreFP-img-complatt} does not (by itself)
imply that $\mathbb{C}$ and $\mathbb{D}$ are complete lattices,
since the union of complete lattices is not necessarily a complete
lattice.

We do proceed, though, similarly%
{} to first prove that $\mathbb{C}$ is a meet-complete lattice. Particularly,
assuming $\mathbb{A}\subseteq\mathbb{C}$, we prove that $\bigwedge\mathbb{A}\in\mathbb{C}$.
Since $\mathbb{O}$ is a complete lattice, the point $\bigwedge\mathbb{A}$
exists in $\mathbb{O}$, and since $\mathbb{A}$ is a subset of $\mathbb{C}$
then the image set $F\left(\mathbb{A}\right)$ is a subset of $\mathbb{D}$
(by the definition of $\mathbb{D}$). By continuity, we have $F\left(\bigwedge\mathbb{A}\right)=\bigsqcap F\left(\mathbb{A}\right)$.
Also, let the set 
\[
\mathbb{B}=\left\{ B\in\mathbb{P}|\exists A\in\mathbb{A}.F\left(A\right)\sqsubseteq B\textrm{ and }G\left(B\right)\leq A\right\} 
\]
be the set of all points in $\mathbb{P}$ that form simultaneous pre-fixed
points when paired with some point in $\mathbb{A}$. Given that for
\emph{each} point $B\in\mathbb{B}$%
{} there exists a point $A\in\mathbb{A}$ such that $F\left(A\right)\in F\left(\mathbb{A}\right)$
and $F\left(A\right)\sqsubseteq B$ (by the definition of simultaneous
fixed points), then the meet of all points $F\left(A\right)$ is less
than or equal to the meet of all points $B$, \emph{i.e.}, we have
$\bigsqcap F\left(\mathbb{A}\right)\sqsubseteq\bigsqcap\mathbb{B}$.
Using a similar argument, we also have $\bigwedge G\left(\mathbb{B}\right)\leq\bigwedge\mathbb{A}$.
By continuity, substituting for $\bigsqcap F\left(\mathbb{A}\right)$and
$\bigwedge G\left(\mathbb{B}\right)$ we have 
\[
F\left(\bigwedge\mathbb{A}\right)\sqsubseteq\bigsqcap\mathbb{B}\textrm{ and }G\left(\bigsqcap\mathbb{B}\right)\leq\bigwedge\mathbb{A}.
\]
As such, for all $\mathbb{A}\subseteq\mathbb{C}$, the point $\bigwedge\mathbb{A}$,
when paired with the point $\bigsqcap\mathbb{B}$, forms a pair of
simultaneous pre-fixed points of $F$ and $G$, and is thus a member
of $\mathbb{C}$. As such, $\mathbb{C}$ is a meet-complete lattice.
(Figure~\vref{fig:sim-prefix-latt-1} illustrates the proof for subsets
of $\mathbb{C}$ and $\mathbb{D}$ that have two elements.)
\begin{figure}
\noindent \begin{centering}
\includegraphics[scale=0.5]{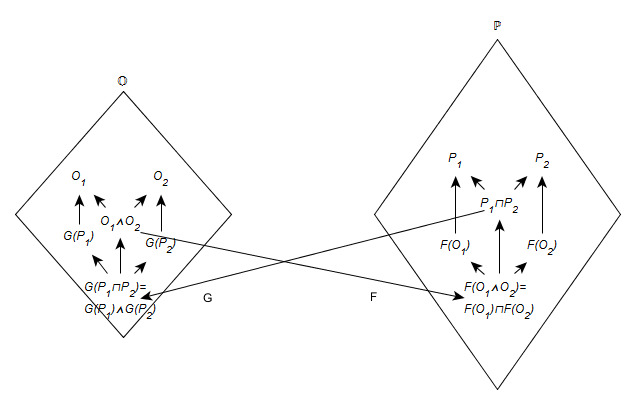}
\par\end{centering}
\caption{\label{fig:sim-prefix-latt-1}Illustrating Lemma~\ref{lem:PreFP-complatt}
(compare to Figure~\ref{fig:Cont-FG}).}
\end{figure}

Dually, we can prove that set $\mathbb{C}$ is also a join-complete
lattice (with the point $\top_{\mathbb{O}}$, the top element of $\mathbb{O}$,
at its top). Hence, set $\mathbb{C}$ is a meet-complete lattice and
a join-complete lattice, \emph{i.e.}, is a complete lattice.

Symmetrically, or by using Lemma~\ref{lem:cont-complatt}, we can
prove that set $\mathbb{D}$ also is a complete lattice (with the
point $\top{}_{\mathbb{P}}$, the top element of $\mathbb{P}$, at
its top). As such, the sets of all components of simultaneous pre-fixed
points are complete lattices.

Dually, we can also prove that the two sets $\mathbb{E}$ and $\mathbb{F}$
of all components of simultaneous \emph{post}-fixed points---\emph{i.e.},
duals of sets $\mathbb{C}$ and $\mathbb{D}$---are complete lattices
(with the points $\bot_{\mathbb{O}}$ and $\bot_{\mathbb{P}}$ at
their bottom), as required.
\end{proof}

\subsection{The Simultaneous Fixed Points Theorem}

The following theorem, asserting the existence of least and greatest
simultaneous fixed points, is the central theorem of this paper%
.\footnote{To the best of our knowledge, neither mutual (co)induction as we define
it in this paper nor a proof of the Simultaneous Fixed Points Theorem
have been presented formally before.}
\begin{thm}[The Simultaneous Fixed Points Theorem]
\label{thm:SFP}If $\left(\mathbb{O},\leq,\wedge,\vee\right)$ and
$\left(\mathbb{P},\sqsubseteq,\sqcap,\sqcup\right)$ are two complete
lattices and $F:\mathbb{O}\rightarrow\mathbb{P}$ and $G:\mathbb{P}\rightarrow\mathbb{O}$
are\emph{ }two continuous mutual endofunctions (i.e., two simultaneous
generators) over $\mathbb{O}$ and $\mathbb{P}$ then we have the
following:
\end{thm}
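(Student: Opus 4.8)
The plan is to take the least simultaneous pre-fixed points supplied by Lemma~\ref{lem:PreFP-complatt} and ``upgrade'' that pair to a simultaneous fixed point, with the order-dual argument handling the greatest ones. In detail: by Lemma~\ref{lem:PreFP-complatt} the set $\mathbb{C}\subseteq\mathbb{O}$ of all $\mathbb{O}$-components and the set $\mathbb{D}\subseteq\mathbb{P}$ of all $\mathbb{P}$-components of simultaneous pre-fixed points of $F$ and $G$ are complete lattices; put $\mu_{F}=\bigwedge\mathbb{C}$ and $\mu_{G}=\bigsqcap\mathbb{D}$, their least elements. I would first note that the pair $(\mu_{F},\mu_{G})$ is itself a simultaneous pre-fixed point of $F$ and $G$: it is exactly the pair $(\bigwedge\mathbb{A},\bigsqcap\mathbb{B})$ produced in the proof of Lemma~\ref{lem:PreFP-complatt} upon instantiating $\mathbb{A}$ to all of $\mathbb{C}$, since then the auxiliary set $\mathbb{B}$ there coincides with $\mathbb{D}$. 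Hence $(\mu_{F},\mu_{G})$ is \emph{the} least simultaneous pre-fixed point, and the mutual induction principle is immediate: if $F(O)\sqsubseteq P$ and $G(P)\leq O$ then $O\in\mathbb{C}$ and $P\in\mathbb{D}$, so $\mu_{F}\leq O$ and $\mu_{G}\sqsubseteq P$.

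The heart of the proof is to show $(\mu_{F},\mu_{G})$ is a simultaneous \emph{fixed} point. From simultaneous pre-fixedness we already have $F(\mu_{F})\sqsubseteq\mu_{G}$ and $G(\mu_{G})\leq\mu_{F}$, so only the reverse inequalities remain. Since $(\mu_{F},\mu_{G})$ is a simultaneous pre-fixed point, Lemma~\ref{lem:comp-fp} gives $G(F(\mu_{F}))\leq\mu_{F}$ and $F(G(\mu_{G}))\sqsubseteq\mu_{G}$. The first of these, together with the trivial $F(\mu_{F})\sqsubseteq F(\mu_{F})$, shows that $(\mu_{F},F(\mu_{F}))$ is a simultaneous pre-fixed point, so $F(\mu_{F})\in\mathbb{D}$ and therefore $\mu_{G}\sqsubseteq F(\mu_{F})$ as $\mu_{G}$ is least in $\mathbb{D}$; antisymmetry of $\sqsubseteq$ then gives $F(\mu_{F})=\mu_{G}$. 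The second, together with $G(\mu_{G})\leq G(\mu_{G})$, shows $(G(\mu_{G}),\mu_{G})$ is a simultaneous pre-fixed point, so $G(\mu_{G})\in\mathbb{C}$ and $\mu_{F}\leq G(\mu_{G})$; antisymmetry of $\leq$ gives $G(\mu_{G})=\mu_{F}$. Thus $(\mu_{F},\mu_{G})$ is a simultaneous fixed point; and since every simultaneous fixed point is in particular a simultaneous pre-fixed point, while $(\mu_{F},\mu_{G})$ is the \emph{least} simultaneous pre-fixed point, $(\mu_{F},\mu_{G})$ is the least simultaneous fixed point.

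For the greatest simultaneous post-fixed and fixed points I would run the order-dual of everything above. By the dual of Lemma~\ref{lem:PreFP-complatt} the sets $\mathbb{E}\subseteq\mathbb{O}$ and $\mathbb{F}\subseteq\mathbb{P}$ of components of simultaneous \emph{post}-fixed points of $F$ and $G$ are complete lattices; put $\nu_{F}=\bigvee\mathbb{E}$ and $\nu_{G}=\bigsqcup\mathbb{F}$, their greatest elements. As before $(\nu_{F},\nu_{G})$ is the greatest simultaneous post-fixed point -- yielding the mutual coinduction principle, i.e., $P\sqsubseteq F(O)$ and $O\leq G(P)$ imply $O\leq\nu_{F}$ and $P\sqsubseteq\nu_{G}$ -- and the dual of the upgrade step (starting from $\nu_{F}\leq G(F(\nu_{F}))$ and $\nu_{G}\sqsubseteq F(G(\nu_{G}))$, the duals of Lemma~\ref{lem:comp-fp}, together with reflexivity) shows $(\nu_{F},\nu_{G})$ is a simultaneous fixed point, hence the greatest one.

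I expect the main obstacle to be the subtle point already flagged in the footnote to Lemma~\ref{lem:PreFP-img-complatt}: the inequality $G(F(\mu_{F}))\leq\mu_{F}$ used in the upgrade step is legitimate \emph{only because} $\mu_{F}$ is a genuine component of a simultaneous pre-fixed point -- the one witnessed by $\mu_{G}$ -- so that Lemma~\ref{lem:comp-fp} actually applies to it; the analogous inequality fails for a general $O\in\mathbb{O}$, and conflating the two would wrongly collapse every point of $\mathbb{O}$ and $\mathbb{P}$ into a simultaneous fixed point. Keeping the bookkeeping of ``which partner goes with which point'' correct -- in particular that it is $\mu_{G}$ itself, not merely some element of $PreFP_{F,G}(\mu_{F})$, that lies below $F(\mu_{F})$ -- is exactly why the first paragraph must identify $(\mu_{F},\mu_{G})$ as a simultaneous pre-fixed pair before the upgrade can be performed. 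It is also worth noting that continuity is invoked in this proof only through Lemma~\ref{lem:PreFP-complatt}, to guarantee the existence of the least (and dually greatest) simultaneous pre-fixed points; the upgrade argument itself needs only monotonicity, consistent with the alternative development in Appendix~\ref{sec:Requiring-Only-Monotonicity}.
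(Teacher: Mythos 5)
Your proposal is correct and takes essentially the same route as the paper's proof: define $\mathbb{C}$ and $\mathbb{D}$, set $\mu_{F}=\bigwedge\mathbb{C}$ and $\mu_{G}=\bigsqcap\mathbb{D}$ (existing and attained by Lemma~\ref{lem:PreFP-complatt}), show $F\left(\mu_{F}\right)=\mu_{G}$ and $G\left(\mu_{G}\right)=\mu_{F}$ by antisymmetry, and dualize for $\nu_{F}$ and $\nu_{G}$. Your explicit verification that $\left(\mu_{F},\mu_{G}\right)$ is itself a simultaneous pre-fixed point and that $F\left(\mu_{F}\right)\in\mathbb{D}$ and $G\left(\mu_{G}\right)\in\mathbb{C}$ via Lemma~\ref{lem:comp-fp} merely spells out a step the paper handles by reference to the proof of Lemma~\ref{lem:PreFP-img-complatt}, so there is no substantive difference.
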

\begin{itemize}
\item the \emph{least simultaneous pre-fixed points }of $F$ and $G$, called
$\mu_{F}$ and $\mu_{G}$, exist in $\mathbb{O}$ and $\mathbb{P}$,
\item $\mu_{F}$ and $\mu_{G}$ are also the \emph{least simultaneous fixed
points} of $F$ and $G$,
\item the \emph{greatest simultaneous post-fixed points} of $F$ and $G$,
called $\nu_{F}$ and $\nu_{G}$, exist in $\mathbb{O}$ and $\mathbb{P}$,
and 
\item $\nu_{F}$ and $\nu_{G}$ are also the \emph{greatest simultaneous
fixed points} of $F$ and $G$.
\end{itemize}
\begin{proof}
Let the set 
\[
\mathbb{C}=\left\{ O\in\mathbb{O}|\exists P\in\mathbb{P}.F\left(O\right)\sqsubseteq P\textrm{ and }G\left(P\right)\leq O\right\} 
\]
 and the set 
\[
\mathbb{D}=\left\{ P\in\mathbb{P}|\exists O\in\mathbb{O}.F\left(O\right)\sqsubseteq P\textrm{ and }G\left(P\right)\leq O\right\} 
\]
 be the sets of all components of simultaneous pre-fixed points of
$F$ and $G$, and let points $\mu_{F}\textrm{ and }\mu_{G}$ be %
defined as 
\begin{eqnarray}
\mu_{F} & = & \underset{F\left(O\right)\sqsubseteq P\textrm{ and }G\left(P\right)\leq O}{\bigwedge O}=\bigwedge\mathbb{C}\label{eq:muF}\\
\mu_{G} & = & \underset{F\left(O\right)\sqsubseteq P\textrm{ and }G\left(P\right)\leq O}{\bigsqcap P}=\bigsqcap\mathbb{D}.\label{eq:muG}
\end{eqnarray}

By Lemma~\ref{lem:PreFP-complatt}, points $\mu_{F}\textrm{ and }\mu_{G}$
are guaranteed to exist as the least elements of $\mathbb{C}$ and
$\mathbb{D}$. By the antisymmetry of $\leq$ and $\sqsubseteq$,
we can conclude that 
\begin{equation}
F\left(\mu_{F}\right)=\mu_{G}\textrm{ and }G\left(\mu_{G}\right)=\mu_{F}.\label{eq:least-prefix}
\end{equation}
First, we note that $F\left(\mu_{F}\right)\sqsubseteq\mu_{G}$ because
$\mu_{F}$ is the least element of $\mathbb{C}$ and thus, according
to the definition of simultaneous pre-fixed points (and as noted in
the proof of Lemma~\ref{lem:PreFP-img-complatt}), its image $F\left(\mu_{F}\right)$
is less than any element of $\mathbb{D}$, including the point $\mu_{G}$,
but, second, also we note that $\mu_{G}\sqsubseteq F\left(\mu_{F}\right)$
since $\mu_{G}$ is the least element of $\mathbb{D}$ and thus is
less than any point in $\mathbb{D}$, including the point $F\left(\mu_{F}\right)$.
By the antisymmetry of $\sqsubseteq$, we conclude that $F\left(\mu_{F}\right)=\mu_{G}$.

Symmetrically, we also have $G\left(\mu_{G}\right)=\mu_{F}$.

As such, the points $\mu_{F}$ and $\mu_{G}$ are simultaneous fixed
points of $F$ and $G$. They are also the least simultaneous pre-fixed
points of $F$ and $G$ since, by Equation~(\ref{eq:least-prefix}),
less-demandingly we have
\[
F\left(\mu_{F}\right)\sqsubseteq\mu_{G}\textrm{ and }G\left(\mu_{G}\right)\leq\mu_{F},
\]
meaning that points $\mu_{F}$ and $\mu_{G}$ are simultaneous pre-fixed
points of $F$ and $G$, and, by the individual uniqueness and minimality
of each of $\mu_{F}$ and $\mu_{G}$ (as the meets of the complete
lattices $\mathbb{C}$ and $\mathbb{D}$), points $\mu_{F}$ and $\mu_{G}$
are the least such points.

Now we have established both that $\mu_{F}\textrm{ and }\mu_{G}$
form the least simultaneous pre-fixed points of $F$ and $G$, and
that $\mu_{F}\textrm{ and }\mu_{G}$ are simultaneous fixed points
of $F$ and $G$, so $\mu_{F}\textrm{ and }\mu_{G}$ are the least
simultaneous fixed points of $F$ and $G$.

Using a dual argument, we can also prove that $\nu_{F}=\vee\mathbb{E}\textrm{ and }\nu_{G}=\sqcup\mathbb{F}$,
where sets $\mathbb{E}$ and $\mathbb{F}$ are the duals of sets $\mathbb{C}$
and $\mathbb{D}$ (see proof of Lemma~\ref{lem:PreFP-complatt}),
are the greatest simultaneous fixed points of $F\textrm{ and }G$,
as required.
\end{proof}

\section{\label{sec:Motivations-from-PLT}Motivations from PL Theory}

The significance of mutually-recursive definitions in programming
languages (PL) semantics and PL type theory is illustrated in the
following examples.

\subsection{Mutual Recursion at The Level of Data Values}

Lawrence Paulson, in his well-known book `ML for the Working Programmer'~\cite[p.58]{Paulson1996},
made some intriguing assertions, and presented an intriguing code
example. According to Paulson, ``Functional programming and procedural
programming are more alike than you may imagine''---a statement
that some functional programmers today are either unaware of, may
oppose, or may silently ignore. Paulson further states, verbatim,
that ``Any combination of \code{goto} and assignment statements
--- the worst of procedural code --- can be translated to a set
of mutually recursive functions.''

Then Paulson presents a simple example of imperative code. Here it
is.
\begin{lyxcode}
\begin{lstlisting}[language=C,tabsize=4]
var x := 0; y := 0; z := 0;
F:  x := x+1; goto G
G:  if y<z then goto F else (y := x+y; goto H)
H:  if z>0 then (z := z-x; goto F) else stop
\end{lstlisting}
\end{lyxcode}
To convert this imperative code into pure functional code, Paulson
then suggests: ``For each of the labels, \code{F}, \code{G}, and
\code{H}, declare mutually recursive functions. The argument of each
function is a tuple holding all of the variables.''

Here's the result when the method is applied to the imperative code
above:
\begin{lyxcode}
\begin{lstlisting}[language=ML,tabsize=4]
fun F(x,y,z) = G(x+1,y,z)  
and G(x,y,z) = if y<z then F(x,y,z) else H(x,x+y,z)
and H(x,y,z) = if z>0 then F(x,y,z-x) else (x,y,z);
\end{lstlisting}

\end{lyxcode}
Calling \code{F(0,0,0)} gives \code{x}, \code{y}, and \code{z}
their initial values for execution, and returns \code{(1,1,0)}---the
result of the imperative code. As such, Paulson concludes that: ``Functional
programs are referentially transparent, yet can be totally opaque''---
a statement which we read to mean that, for PL theorists in general
if not also for many mathematicians, functional programs (FP) are
{[}usually{]} easy to reason about, yet can {[}sometimes{]} be very
hard to reason about. Then Paulson concludes his discussion by sounding
the siren: ``If your code starts to look {[}sic{]} like this, beware!''

We can also introduce object-oriented programming (OOP) in this discussion.
In particular, a possible translation of Paulson's imperative code
to (non-imperative) OO code is as follows (in~\cite{AbdelGawad2017c}
we present a translation to a slightly more-succinct imperative OO
code):
\begin{lyxcode}
\begin{lstlisting}[language=Java,tabsize=4]
class C {
  final x, y, z: int
	
  // constructor
  C(xx,yy,zz: int) { x = xx; y = yy; z = zz }

  C F() { new C(x+1,y,z).G() }

  C G() { 
    if y < z then this.F()
    else new C(x,x+y,z).H() }

  C H() {
    if z > 0 then new C(x, y, z-x).F()
    else this }
}
\end{lstlisting}
\end{lyxcode}
Now, similar to Paulson's functional program, calling \code{new C(0,0,0).F()}
gives the fields \code{x}, \code{y}, and \code{z} their initial
values for execution, and returns an object equivalent to \code{new C(1,1,0)}---\emph{i.e.},
equivalent to the result of the imperative code.

Pondering a little over some of the ``worst'' imperative code and
over its translations to mutually-recursive functional and object-oriented
code suggests a strong similarity---if not equivalence---between
OOP, mutually-recursive FP, and procedural/imperative programming.
Further, this discussion implies that the mathematical-reasoning benefits
of functional programming---particularly the relative simplicity
of such reasoning---seem to crucially depend on \emph{not} heavily
using mutually-recursive function definitions (Paulson's concluding
warning can be read as an explicit warning \emph{against} writing
heavily mutually-recursive functional code). As the imperative code
and the OOP translation illustrate, and as is commonly known among
mainstream and industrial-strength software developers, however, heavily
mutually recursive definitions seem to be an essential and natural
feature of real-world/in\-dustrial-strength programming.

More significantly, the above translation between imperative, functional
and object-oriented code seems to also tell us that:

\begin{center}
\emph{if mutually-recursive functional programs can be reasoned about
mathematically,}\\
\emph{then also imperative and object-oriented programs (even the
worst such programs)}\\
\emph{can be reasoned about mathematically}.
\par\end{center}

A main objective behind the formal definition of mutual (co)induction
in this paper is to help in reasoning about mutually-recursive functional
programs mathematically (possibly making it even as simple as reasoning
about standard recursive functional programs is, based on using the
standard (co)induction principles), and, ultimately, to thereby possibly
help in reasoning about (even the worst?) imperative and object-oriented
programs mathematically too.

It should be noted that sometimes it is possible to reexpress mutual
recursion (also called \emph{indirect} recursion) or convert it into
standard recursion (also called \emph{direct} recursion), \emph{e.g.},
using the inlining conversion method presented in~\cite{Kaser1993,Yu1997}
or `with the help of an additional argument' as suggested by Paulson
also in~\cite[p.58]{Paulson1996}. From the results in~\cite{Kaser1993,Yu1997},
however, not all mutual recursion can be converted to standard recursion.
Extending from these results, while it is possible that some mutual
(co)inductive definitions can be converted to or encoded using standard
(co)inductive definitions, we conjecture that not all mutual (co)induction
can be translated into standard (co)induction. Hence, the need arises
for a genuine formal definition of mutual (co)induction that does
\emph{not} involve an encoding or translation of it into terms of
standard (co)induction.

\subsection{\label{subsec:Mutual-Recursion-Types}Mutual Recursion at The Level
of Data Types}

The previous section discussed mutual recursion at the level of values
(\emph{i.e.}, mutually-recursive data values, functions, or methods).
As is well-known in PL type theory, mutually-recursive types are essential
for typing mutually-recursive data values~\cite{MPS,TAPL}. Given
the ubiquity of mutually-recursive data values in OOP (via the special
variable \code{this/self}), mutually-recursive data \emph{types}
and mutually-recursive data \emph{type} constructors (\emph{e.g.},
\code{class}es, \code{interface}s, \code{trait}s, ...~etc.) are
ubiquitous in industrial-strength statically-typed OOP as well (\emph{e.g.},
in Java, C\#, C++, Kotlin, and Scala).

Further, in generic OOP languages such as Java, variance annotations
(such as wildcard types in Java) can be modeled by and generalized
to interval types~\cite{AbdelGawad2018c}. As presented in~\cite{AbdelGawad2018c},
the definition of ground types in Java depends on the definition of
interval types, whose definition, circularly (\emph{i.e.}, mutually-recursively),
depends on the definition of ground types. Further, the subtyping
relation between ground types depends on the containment (also called
\emph{subintervaling} or \emph{interval inclusion}) relation between
interval types, and vice versa. As such, the set of ground types and
the set of interval types in Java are examples of mutual recursive
sets, and the subtyping and the containment relations over these sets,
respectively, are mutually recursive relations too. (See~\cite{AbdelGawad2018b,AbdelGawad2018c}
for how, under an inductive interpretation, both sets---types and
interval types---and both relations---subtyping and containment---can
be iteratively constructed from the given set of classes of a Java
program together with the subclassing relation between these classes).

To illustrate all aspects of mutual recursion in OOP (\emph{i.e.},
at the level of type, at the level of values, and in defining types/subtyping
and interval types/containment), the following OO code, written in
an imaginary Java-like language, presents a simple set of mutually
recursive classes to model the mutual recursion between in the definition
of types and interval types.
\begin{lyxcode}
\begin{lstlisting}[language=Java,tabsize=4]
class Class {
  String name; // holds the name of the class
}

class Type {
  Class c;
  Interval i; // the type arg of c. null if c is not generic
}

class Interval {
  Type UBnd; // the upper bound of the interval type
  Type LBnd; // its lower bound
}
// Note the mutual recursion (at the level of types) between
//   the definitions of classes Type and Interval
\end{lstlisting}
\end{lyxcode}
And the following code adds to the code above a simple set of mutually
recursive methods to model the mutual recursion in the definitions
of the subtyping and containment relations.
\begin{lyxcode}
\begin{lstlisting}[language=Java,tabsize=4]
class Class {
  String name;

  bool isSubclassOf(Class c) {
    // Handle special classes Null and Object
    if(this == NullCls || c == ObjCls) return true;
    // Else check if this class inherits from class c
    return inher_table.lookup(this, c);
  }
}
\end{lstlisting}

\begin{lstlisting}[language=Java,tabsize=4]
class Type {
  Class c;
  Interval i;

  bool isSubtypeOf(Type t){
    // assuming that i and t.i are not null,
    //   ie, that c and t.c are generic
    return c.isSubclassOf(t.c) && i.isSubintOf(t.i)
  }
}
\end{lstlisting}

\begin{lstlisting}[language=Java,tabsize=4]
class Interval {
  Type UBnd;
  Type LBnd;

  bool isSubint(Interval i){
    // covar in upperbound, contravar in lower bound
    return UBnd.isSubtypeOf(i.UBnd) && i.LBnd.isSubtypeOf(LBnd)
  }
}
// Note the mutual recursion (at the level of values) between
//   the definitions of methods isSubtypeOf and isSubintOf
\end{lstlisting}
\end{lyxcode}

\subsection{Mutual Coinduction in OOP}

As discussed in detail in $\mathsection$\ref{subsec:Mutual-Recursion-Types},
mutually recursively definitions exist in OOP at two levels: at the
level of values (%
via \code{this}) and at the level of types (\emph{i.e.}, between
classes, and in the definition of the subtyping and containment relations).
It should be noted that the subtyping relation is covariant/monotonic
w.r.t. containment, and that containment is covariant w.r.t. subtyping
in the first argument (\emph{i}.\emph{e.}, w.r.t. the upper bound
of an interval type) and is contravariant w.r.t. the second argument
(\emph{i.e.}, w.r.t. the lower bound of an interval type). As such,
in Java with interval types, subintervals generates subtypes, and
subtypes in the upper bounds of interval types generates subintervals,
while subtypes in the lower bounds of interval types generates superintervals.
.

Just as for standard (co)induction, where the notions of (least) pre-fixed
points and (greatest) post-fixed points have relevance and practical
value even when such points do \emph{not }correspond to fixed points,
\emph{e.g.}, when the underlying posets are \emph{not} complete lattices
or when the generators are \emph{not} monotonic~(see, for example,~\cite{AbdelGawad2019b,AbdelGawad2018f}),
we expect that the notions of (least) simultaneous pre-fixed points
and (greatest) post-fixed points to have relevance and practical value
even when these points do \emph{not} correspond to simultaneous fixed
points, \emph{e.g.}, when the underlying posets are \emph{not} complete
lattices or when the generators are \emph{not} continuous (and may
not be even monotonic).

\section{\label{sec:Requiring-Only-Monotonicity}Requiring Only Monotonicity}

While the proofs presented in Appendix~\ref{sec:Theorems-and-Proofs}
are correct, and the continuity of simultaneous/mutual generators
does indeed guarantee the existence of least and greatest simultaneous
fixed points in complete lattices, but, as we suggested in Section~\ref{sec:Conclusion-and-FW},
in this appendix we explain how indeed monotonicity of the generators
alone is enough to prove the existence of least and greatest simultaneous
fixed points, thereby relaxing the condition on generators.

The first hint that monotonicity is enough (that in fact is not just
a hint but is a proof, albeit an indirect one) comes from the fact
that monotonic mutual generators over complete lattices can be represented
or encoded, in a standard way, as one monotonic generator over one
complete lattice, for which a (standard) least and greatest fixed
point is guaranteed to exist (by the standard Knaster-Tarski theorem),
and in which the least fixed point and the greatest fixed point correspond
directly to the least simultaneous fixed point and the greatest simultaneous
fixed point of the original mutual generators.

In particular, as in Appendix~\ref{sec:Theorems-and-Proofs}, let
$F:\mathbb{O}\rightarrow\mathbb{P}$ and $G:\mathbb{P}\rightarrow\mathbb{O}$
be two monotonic mutual generators over complete lattices $\mathbb{O}$
and $\mathbb{P}$. Then the product $\mathbb{OP}\doteq\mathbb{O}\times\mathbb{P}$
\emph{is} also a complete lattice under the standard component-wise
ordering (i.e., where $\left(o_{1},p_{1}\right)\le\left(o_{2},p_{2}\right)\mathrm{\,iff\,}o_{1}\le o_{2}\wedge p_{1}\sqsubseteq p_{2}$).
Further, the function $H\left(o,p\right):\mathbb{OP}\rightarrow\mathbb{OP}\doteq\left(G\left(p\right),F\left(o\right)\right)$
\emph{is} a monotonic function over $\mathbb{OP}$ (given the monotonicity
of $F$ and $G$). As such, by the standard Knaster-Tarski theorem,
function $H$ is a generator over the complete lattice $\mathbb{OP}$
that has a least fixed point $\mu_{\mathbb{OP}}=\left(\mu_{\mathbb{O}},\mu_{\mathbb{P}}\right)$
and a greatest fixed point $\nu_{\mathbb{OP}}=\left(\nu_{\mathbb{P}},\nu_{\mathbb{P}}\right)$.
Given our definition of a simultaneous pre-fixed point $\left(o,p\right)$
of $F$ and $G$ (as one where $G\left(p\right)\le o\wedge F\left(o\right)\sqsubseteq p$)
implies that $\mu_{\mathbb{OP}}$(the least fixed point of $H$) is
a simultaneous pre-fixed point of $F$ and $G$, is the least such
point, and, thus, is also the least simultaneous fixed point of $F$
and $G$. Similarly, we can conclude that $\nu_{\mathbb{OP}}$ (the
greatest fixed point of $H$) is the greatest simultaneous fixed point
of $F$ and $G$.

It should be noted, though, that while this proof does not require
the continuity of $F$ and $G$, it only indirectly constructs the
least and greatest simultaneous fixed points of $F$and $G$ as decodings
of the least and greatest fixed points of another function ($H$).
Below we present a more direct proof of the existence of these simultaneous
fixed points that does \emph{not} depend on the conclusion of the
standard Knaster-Tarski theorem, and that also discusses the issue
of multiplicity of pairings in simultaneous pre-fixed and simultaneous
post-fixed points (which we discussed in Section~\ref{sec:Mutual-(Co)Induction},
and led us to consider using continuity). We believe our next proof,
while more intricate and involved that the encoding proof, does reveal
a bit more about the structure and the inner workings of mutual (co)induction
than a standard encoding of mutual (co)induction in terms of standard
(co)induction does.%
{} %

As an appetizer and source of intuition, let's first note the following
lemma.
\begin{lem}[The lub (glb) of components of simultaneous pre(post)-fixed (post-fixed)
points forms a simultaneous pre(post)-fixed point.]
\label{lem:mono-lub-glb}Given the definitions of Theorem~(\ref{thm:SFP}),
if $\left(o,p_{1}\right)$ and $\left(o,p_{2}\right)$ are two simultaneous
pre-fixed points then, by the monotonicity of G, $\left(o,p_{1}\sqcap p_{2}\right)$
is a simultaneous pre-fixed point, and, dually, if $\left(o,q_{1}\right)$
and $\left(o,q_{2}\right)$ are two simultaneous post-fixed points
then, by monotonicity of $F$, $\left(o,q_{1}\sqcup q_{2}\right)$
is a simultaneous post-fixed point.

Further, if $\left(o_{1},p_{1}\right)$ and $\left(o_{2},p_{2}\right)$
are two simultaneous pre-fixed points then, by the monotonicity of
G, $\left(o_{1}\wedge o_{2},p_{1}\sqcap p_{2}\right)$ is a simultaneous
pre-fixed point, and, dually, if $\left(o_{1},q_{1}\right)$ and $\left(o_{2},q_{2}\right)$
are two simultaneous post-fixed points then, by the monotonicity of
$F$, $\left(o_{1}\vee o_{2},q_{1}\sqcup q_{2}\right)$ is a simultaneous
post-fixed point as well.
\end{lem}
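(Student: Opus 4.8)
The plan is to unfold the definitions of \emph{simultaneous pre-fixed point} and \emph{simultaneous post-fixed point} and reduce each of the four assertions to checking two inequalities, one in $\sqsubseteq$ and one in $\le$. The only ingredients I will need are: the universal properties of meets and joins (a meet is a \emph{greatest} lower bound, a join is a \emph{least} upper bound), which are available since $\mathbb{O}$ and $\mathbb{P}$ are complete lattices; the monotonicity of $F$ and of $G$ (the standing hypothesis, and that of Theorem~\ref{thm:SFP}); and transitivity of $\le$ and $\sqsubseteq$. No continuity is used anywhere, which is precisely why this lemma belongs in this appendix.

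First I would dispatch the two ``same first component'' claims. Let $(o,p_1)$ and $(o,p_2)$ be simultaneous pre-fixed points, so $F(o)\sqsubseteq p_1$, $F(o)\sqsubseteq p_2$, $G(p_1)\le o$, and $G(p_2)\le o$. The $\sqsubseteq$-part of $(o,\,p_1\sqcap p_2)$ is immediate: $F(o)$ is a common lower bound of $p_1$ and $p_2$, hence $F(o)\sqsubseteq p_1\sqcap p_2$. For the $\le$-part, $p_1\sqcap p_2\sqsubseteq p_1$, so monotonicity of $G$ gives $G(p_1\sqcap p_2)\le G(p_1)$, and transitivity with $G(p_1)\le o$ yields $G(p_1\sqcap p_2)\le o$. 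The post-fixed statement is obtained by reversing both orders simultaneously: if $(o,q_1),(o,q_2)$ satisfy $q_i\sqsubseteq F(o)$ and $o\le G(q_i)$, then $F(o)$ is a common upper bound of $q_1,q_2$ so $q_1\sqcup q_2\sqsubseteq F(o)$, while $q_1\sqsubseteq q_1\sqcup q_2$ together with monotonicity of $G$ and transitivity gives $o\le G(q_1)\le G(q_1\sqcup q_2)$.

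Next I would prove the two ``meet/join in both components'' claims. Given pre-fixed $(o_1,p_1)$ and $(o_2,p_2)$, to see that $(o_1\wedge o_2,\,p_1\sqcap p_2)$ is pre-fixed: from $o_1\wedge o_2\le o_i$ and monotonicity of $F$ we get $F(o_1\wedge o_2)\sqsubseteq F(o_i)\sqsubseteq p_i$ for $i=1,2$, so $F(o_1\wedge o_2)$ is a lower bound of $\{p_1,p_2\}$ and hence $F(o_1\wedge o_2)\sqsubseteq p_1\sqcap p_2$; symmetrically, from $p_1\sqcap p_2\sqsubseteq p_i$ and monotonicity of $G$ we get $G(p_1\sqcap p_2)\le G(p_i)\le o_i$, so $G(p_1\sqcap p_2)$ is a lower bound of $\{o_1,o_2\}$ and hence $G(p_1\sqcap p_2)\le o_1\wedge o_2$. (Note this half uses monotonicity of $F$ as well as of $G$.) The post-fixed case $(o_1\vee o_2,\,q_1\sqcup q_2)$ is again the order-reversed dual.

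I do not expect a genuine obstacle here --- the lemma is explicitly presented as an ``appetizer.'' The only things requiring care are not conflating the two orders $\le$ and $\sqsubseteq$ (and their respective meets and joins), invoking the glb property for the pre-fixed assertions and the lub property for the post-fixed ones at the right places, and tracking exactly which monotonicity hypothesis each inequality consumes, so that it is evident continuity is never needed. One could alternatively derive the ``both components'' claims from the ``same first component'' claims combined with the analogous ``same second component, varying the first'' statements, but the direct verification above is shorter and keeps the roles of $F$ and $G$ transparent; it is moreover the binary base case of the evident extension to arbitrary meets and joins used later in this appendix.
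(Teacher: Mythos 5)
Your proof is correct, and for the first clause it is essentially the paper's own argument: $F(o)$ is a common lower bound of $p_1$ and $p_2$, hence $F(o)\sqsubseteq p_1\sqcap p_2$, while monotonicity of $G$ plus transitivity gives $G(p_1\sqcap p_2)\le G(p_1)\le o$, with the order-reversed dual for the post-fixed clause --- no continuity anywhere, exactly as in the paper. Where you go beyond the paper is the ``Further'' clause: the paper's proof stops after the shared-first-component cases and never explicitly verifies that $\left(o_1\wedge o_2,\,p_1\sqcap p_2\right)$ (resp.\ $\left(o_1\vee o_2,\,q_1\sqcup q_2\right)$) is a simultaneous pre-fixed (post-fixed) point; your direct check supplies that missing verification, and you rightly note that this half also consumes the monotonicity of $F$, not only of $G$ as the lemma's wording suggests. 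One caution about your closing aside: deriving the two-component claims from the one-component ones is not as routine as you imply, since for instance $\left(o_1\wedge o_2,\,p_1\right)$ need not be a simultaneous pre-fixed point (the inequality $G(p_1)\le o_2$ is not available), so the shared-component statements do not directly compose; the direct verification you actually give is the right move, and nothing in your main argument depends on that aside.
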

\begin{proof}
We have $G\left(p_{1}\right)\le o$ and $F\left(o\right)\sqsubseteq p_{1}$
and $G\left(p_{2}\right)\le o$ and $F\left(o\right)\sqsubseteq p_{2}$,
which means $F\left(o\right)$ is a lower bound of $p_{1}$ and $p_{2}$
but $p_{1}\sqcap p_{2}$ is the greatest lower bound and hence $F\left(o\right)\sqsubseteq p_{1}\sqcap p_{2}$
(1). By the monotonicity of $G$, $p_{1}\sqcap p_{2}\sqsubseteq p_{1}\implies G\left(p_{1}\sqcap p_{2}\right)\le G\left(p_{1}\right)$,
and thus, by transitivity, we have $G\left(p_{1}\sqcap p_{2}\right)\le o$
(2). Combining (1) and (2), the pair $\left(o,p_{1}\sqcap p_{2}\right)$
is a simultaneous pre-fixed point too. (Trying to do this for $p_{1}\sqcup p_{2}$
fails since monotonicity does not necessitate that $G\left(p_{1}\sqcup p_{2}\right)\le o,$but
continuity does ... and to prove the existence of a least simultaneous
fixed point this extra requirement is actually%
{} not needed).

Dually, the monotonicity of $G$ also can be used to prove that $o\le G\left(q_{1}\right)$
and $q_{1}\sqsubseteq F\left(o\right)$ and $o\le G\left(q_{2}\right)$
and $q_{2}\sqsubseteq F\left(o\right)$ implies that $q_{1}\sqcup q_{2}\sqsubseteq F\left(o\right)$
($F\left(o\right)$ is greater than or equal the least upper bound
of $q_{1}$ and $q_{2}$) and that $o\le G\left(q_{1}\right)\le G\left(q_{1}\sqcup q_{2}\right)$
and, hence, that $\left(o,q_{1}\sqcup q_{2}\right)$ is a simultaneous
post-fixed point.
\end{proof}
Similarly, by the monotonicity of $F$, if $\left(n_{1},p\right)$
and $\left(n_{2},p\right)$ are simultaneous pre-fixed points then
$\left(n_{1}\wedge n_{2},p\right)$ is a simultaneous pre-fixed point
(but not necessarily $\left(n_{1}\vee n_{2},p\right)$), and if $\left(o_{1},p\right)$
and $\left(o_{2},p\right)$ are simultaneous post-fixed points then
$\left(o_{1}\vee o_{2},p\right)$ is a simultaneous post-fixed point
(but not necessarily $\left(o_{1}\wedge o_{2},p\right)$).

Given the intuitions provided by Lemma~\ref{lem:mono-lub-glb} and
its dual we can now prove, more directly, the existence of a lsfp
and gsfp as follows. In particular, we prove, now using monotonicity
alone, that the points $\mu_{F}\in\mathbb{O}$ and $\mu_{G}\in\mathbb{P}$
defined in Equations~(\ref{eq:muF}) and~(\ref{eq:muG}) on page~\pageref{eq:muG}
form the least simultaneous fixed point. We do so in two steps, where
we first prove that $\mu_{F}$ and $\mu_{G}$ form the least simultaneous
pre-fixed point, then that they form a simultaneous fixed point, from
which we conclude that $\mu_{F}$ and $\mu_{G}$ define the least
simultaneous fixed point of $F$ and $G$ (only assuming the monotonicity
of $F$ and $G$ but not requiring their continuity).

To prove that $\mu_{F}$ and $\mu_{G}$ define a simultaneous pre-fixed
point first let's note that by the monotonicity of $G$ \footnote{Note that the greatest lower bound (glb) of any set is less than or
equal any element of the set, and thus applying a monotonic function
to the glb produces an element less than or equal to the value of
the function at any element of the set, and is thus a lower bound
of the set of function values. In particular, $p_{i}\leq p_{j}\implies G\left(p_{i}\right)\leq G\left(p_{j}\right)$
implies that $G\left(\sqcap p_{i}\right)\leq G\left(p_{i}\right)$
for each and all $p_{i}$ .}, we have
\[
G\left(\mu_{G}\right)=G\left(\sqcap\mathbb{D}\right)\leq G\left(P_{i}\right)\mathrm{\,for\,all\,}P_{i}\in\mathbb{D}.
\]
 Now, from the definition of simultaneous pre-fixed points we have
$G\left(P_{i}\right)\leq O_{i}\mathrm{\,for\,all\:O_{i}\in\mathbb{C}}$.
As such, by the transitivity of $\leq$, we have $G\left(\mu_{G}\right)\leq O_{i}\mathrm{\,for\,all\:O_{i}\in\mathbb{C}}$.
Thus, $G\left(\mu_{G}\right)$ is a lower bound of $\mathbb{C}.$Given
that $\mu_{F}$is the greatest lower bound of $\mathbb{C}.$thus we
have $G\left(\mu_{G}\right)\leq\mu_{F}$. Similarly, we can prove
that $F\left(\mu_{F}\right)\sqsubseteq\mu_{G}$, thereby proving that
$\mu_{F}$ and $\mu_{G}$ define a simultaneous pre-fixed point.

Further. the definitions of $\mu_{F}$ and $\mu_{G}$ as the \emph{least}
elements of $\mathbb{C}$ and $\mathbb{D}$, respectively, show that
they define the \emph{least} simultaneous pre-fixed point.

Now, following a reasoning much similar to that used in the proof
of the standard Knaster-Tarski fixed point theorem (but not using
its conclusion) (e.g., see~\cite[p.283]{TAPL}), we can prove that
$\mu_{F}$ and $\mu_{G}$ also define a simultaneous \emph{fixed}
point.

As such, given that each simultaneous fixed point is a simultaneous
pre-fixed point, then combining the conclusions of both steps lets
us conclude that $\mu_{F}$ and $\mu_{G}$ define the \emph{least}
simultaneous \emph{fixed} point (lsfp).

Using a dual argument we can prove, only assuming the monotonicity
of $F$ and $G$, that points $\nu_{F}\in\mathbb{O}$ and $\nu_{G}\in\mathbb{P}$
(as defined in Theorem~\ref{thm:SFP}) define the greatest simultaneous
fixed point (gsfp) of $F$ and $G$.
\end{document}